\newcolumntype{P}[1]{>{\centering\arraybackslash}p{#1}}
\newcommand{\tikznode}[2]{\relax
	\ifmmode%
	\tikz[remember picture,baseline=(#1.base),inner sep=0pt] \node (#1) {$#2$};
	\else
	\tikz[remember picture,baseline=(#1.base),inner sep=0pt] \node (#1) {#2};%
	\fi}
\newcommand{\addbar@}[3]{%
	\makebox[0pt][l]{%
		\raisebox{#1}[0pt][0pt]{%
			\kern#2
			\scalebox{#3}[0.8]{$\m@th\mathchar"84$}%
		}%
	}%
}
\DeclareRobustCommand{\lambdabar}{\text{\addbar@{0.1ex}{0.18em}{1}}\lambda}
\newlist{enum-hypothesis}{enumerate}{1}
\setlist[enum-hypothesis]{label=(\arabic*),itemsep=0pt, parsep=0pt}
\setlist[enumerate,1]{label=\arabic*., ref=\arabic*, topsep=1pt, itemsep=2pt, parsep=0pt, leftmargin=1.5em, itemindent=0em, labelsep=0.2em, labelwidth=1.3em}
\setlist[enumerate,2]{label=\alph*., ref=\theenumi.\alph*, topsep=1pt, itemsep=2pt, parsep=0pt, leftmargin=0.5em, itemindent=0em, labelsep=0.2em, labelwidth=1.5em}
\setlist[enumerate,3]{label=\roman*., ref=\theenumii.\roman*, topsep=1pt, itemsep=2pt, parsep=0pt, leftmargin=0.5em, itemindent=0em, labelsep=0.2em, labelwidth=1.2em}
\newtheorem{theorem}{Theorem}[section]
\newtheorem{proposition}[theorem]{Proposition}
\newtheorem{lemma}[theorem]{Lemma}
\newtheorem{corollary}[theorem]{Corollary}
\newtheorem{definition}[theorem]{Definition}
\theoremstyle{plain}
\newtheorem{remark}[theorem]{Remark}
\theoremstyle{break}
\theoremstyle{nonumberplain}
\newtheorem{proof}{Proof}
\newcommand{\bbR}{\mathbb{R}}
\newcommand\bbZ{\mathbb{Z}}
\newcommand{\Man}{\mathcal{M}}
\newcommand{\bbbone}{{\text{\usefont{U}{bbold}{m}{n}\char49}}} 
\newcommand{\mbb}{\bbbone_{2^{m}}}
\newcommand{\calA}{\mathcal{A}}
\newcommand{\calB}{\mathcal{B}}
\newcommand{\calD}{\mathcal{D}}
\newcommand{\calH}{\mathcal{H}}
\newcommand{\calI}{\mathcal{I}}
\newcommand{\calJ}{\mathcal{J}}
\newcommand{\calK}{\mathcal{K}}
\newcommand{\calS}{\mathcal{S}}
\newcommand{\calU}{\mathcal{U}}
\newcommand{\algA}{\calA}
\newcommand{\hs}{\calH}
\newcommand{\defeq}{\vcentcolon=} 
\DeclareMathOperator{\Aut}{Aut}
\DeclareMathOperator{\Inn}{Inn}
\DeclareMathOperator{\Tr}{Tr}	   
  \newcommand{\Ths}[1][]{\widetilde{\hs}}
\newcommand{\act}{\calS} 
\newcommand{\Sp}{\mathcal{S}} 
 \newcommand{\Dir}{D} 
\newcommand{\tw}{K}
\newcommand{\twr}{{\tw_{\scriptscriptstyle (1)}}}
\newcommand{\twpr}{{\tw_{\scriptscriptstyle (2)}}}
\newcommand{\rhr}{{\rho_{\scriptscriptstyle (1)}}}
\newcommand{\rhpr}{{\rho_{\scriptscriptstyle (2)}}}
\newcommand{\Tadj}{{\dagger_\tw}}
\newcommand{\Tadjpr}{{\dagger_\twpr}}
\newcommand{\KDir}{\Dir^{\tw }}
\newcommand{\TprDir}{\Dir^{\,\twpr }}
\newcommand{\rDir}{{\Dir_{\scriptscriptstyle (1)}}}
\newcommand{\prDir}{{\hat{\Dir}_{\scriptscriptstyle (2)}}}
\newcommand{\LDir}{\Dir^{L\, }}
\newcommand{\HDir}{\hat{\Dir}}
\newcommand{\HTDir}{ \hat{\Dir}^{\,\tw }}
\newcommand{\HTrDir}{ \hat{\Dir}^{\,\twr }}
\newcommand{\gpr}{{g_{\scriptscriptstyle PR}}} 
\newcommand{\gr}{{g_{\scriptscriptstyle R}}} 
\newcommand{\g}{g} 
\newcommand{\gmr}{\gamma_{\scriptscriptstyle R}} 
\newcommand{\gmpr}{\gamma_{\scriptscriptstyle PR}} 
\newcommand{\fnr}{\calJ_r^{\scriptscriptstyle R}}
\newcommand{\fnpr}{\calJ_r^{\scriptscriptstyle PR}}
\newcommand{\cl}{c} 
\newcommand{\clr}{{\cl_{\scriptscriptstyle R}}} 
\newcommand{\clpr}{{\cl_{\scriptscriptstyle PR}}}
\newcommand{\prr}{P_r^{\scriptscriptstyle R}} 
 \newcommand{\prpr}{P_r^{\scriptscriptstyle PR}} 
 \newcommand{\twphi}{{ \phi^{\,\tw}}}
  \newcommand{\twphir}{{ \phi^{\,\twr}}}
 \newcommand{\twphipr}{{ \phi^{\,\twpr}}}
\newcounter{mnotecount}[section]
\renewcommand{\themnotecount}{\thesection.\arabic{mnotecount}}
\newcommand{\mnote}[1]%
{\protect{\stepcounter{mnotecount}}${}^{\text{\footnotesize$\bullet$\themnotecount}}$%
\reversemarginpar%
\marginpar{\raggedleft\footnotesize$\bullet$\themnotecount: #1}}
\newlength{\mnotewidth}
\definecolor{blueamu}{RGB}{0, 101, 189}
\definecolor{cyanamu}{RGB}{61, 183, 228}
\newcommand{\dhorline}[3][0]{%
	\tikz[baseline=-2pt]{\path[decoration={markings, 
			mark=between positions 0 and 1 step 2*#3
			with {\node[color=blueamu, fill, circle, minimum width=#3, inner sep=0pt, anchor=south west] {};}},postaction={decorate}]  (0,#1) -- ++(#2,0);}}
\newcommand{\dvertline}[3][0]{%
	\tikz[baseline=2em]{\path[decoration={markings,
			mark=between positions 0 and 1 step 2*#2
			with {\node[color=black!50, fill, circle, minimum width=#2, inner sep=0pt, anchor=south west] {};}},postaction={decorate}] (0, #1) -- ++(0,#3);}}
\newcommand\HUGE{\@setfontsize\Huge{28}{0}}\makeatother		
\numberwithin{equation}{section}
\begin{document}
\renewcommand\figurename{Fig.}

{
	\makeatletter\def\@fnsymbol{\@arabic}\makeatother 
	\title{Signature change by a morphism\\ of spectral triples}


	\author{G. Nieuviarts\footnote{gaston.nieuviarts.wk@gmail.com}\\
		{\small DIMA, Universita di Genova}%
		\\
		\small{via Dodecaneso, 16146 Genova, Italia}\\[2ex]
	}
 
	\date{}
	
	\maketitle
	\vspace{-\baselineskip}
}  
  
\setcounter{tocdepth}{3}
 
\begin{abstract}
	We present a connection between twisted spectral triples and pseudo-Riemannian spectral triples, rooted in the fundamental interplay between twists and Krein products. A concept of morphism of spectral triples is introduced, transforming one spectral triple into its dual. In the case of even-dimensional manifolds, we demonstrate how this construction implements a local signature change via the parity operator induced by the twist. Consequently, the signature change transformation is governed solely by the unitary operator that implements the twist. This unitary is a central element of our approach, as it is directly linked to the Krein product, the twist, and the parity operator that implements the signature change.
\end{abstract}

\tableofcontents 
 
\newpage
 
\section{Introduction}
\label{sec introduction}

One of the cornerstones in the development of noncommutative geometry is the duality between topological spaces $\Man$ and the commutative algebras of smooth functions $C^\infty(\Man)$ pointed out by the Gelfand-Naimark theorem. The key idea that emerged was that we can describe “spaces” starting from such algebras. The algebra becomes the basic structure, and fundamental frameworks such as topology, Riemannian geometry, and differential calculus have been successfully translated within this new perspective. An important result known as the Connes reconstruction theorem (see \cite{connes2013spectral}) states that taking $C^\infty(\Man)$, $L^2(\Man, \Sp)$\footnote{The small $R$ that appears in symbols such as $\gmr$, $\nabla^{\scriptscriptstyle R,S}$ or
	$\gr$ indicates that the corresponding objects are associated with Riemannian structures.	Similarly, a small $PR$ will be used for pseudo-Riemannian structures.}, the Hilbert space of square-integrable sections of the spinor bundle over $\Man$, $\Dir=-i\gmr^a\nabla_a^{{\scriptscriptstyle R, S}}$, the Dirac operator with $\gmr^a$ the Riemannian gamma matrices and $\nabla_a^{{\scriptscriptstyle R, S}}$ the Levi-Civita spin connection, $J$ a given anti-unitary operator, and $\Gamma$ a grading, then spectral triples given by $(C^\infty(\Man), L^2(\Man, \Sp), \Dir, J, \Gamma)$ are in one-to-one correspondence with Riemannian spin manifolds, being a fundamental structure to understand the influence of Gravitation on spinor fields. Beyond the conceptual interest of this approach, an added value is that this reformulation allowed us to extend our geometrical conception by considering what this structure becomes when moving to general noncommutative algebras.

A significant achievement of noncommutative geometry has been the re-expression of Euclidean Gravitation and Yang-Mills theories within a unified geometric framework in which gauge and Higgs fields emerge as fluctuations of the Dirac operator. This is called the noncommutative standard model of particle physics; we refer to \cite{connes2006noncommutative, van2015noncommutative, lizzi2018noncommutative} for a detailed explanation of the model. However, this approach faces a fundamental problem: the model is inherently Euclidean. This is because Connes' reconstruction theorem works only for positive definite signatures. We therefore lose the characteristic causal structure of spaces with the Lorentz signature, thus missing out on physics. This limitation prevents an accurate description of the causal structure required in relativistic physics.

This weak point is at the origin of several attempts to find a Lorentzian version of the framework of spectral triples, such as \cite{barrett2007lorentzian, franco2014temporal, paschke2006equivariant, strohmaier2006noncommutative, bochniak2018finite, devastato2018lorentz, van2016krein, bizi2018space, d2016wick}. A common feature shared by most of these approaches is the replacement of the Hilbert space by a Krein space, turning the inner product $\langle \, .\, , \, .\, \rangle$ into an indefinite inner product $\langle\, . \, , \, . \, \rangle_\calJ\defeq \langle\, . \, , \, \calJ . \, \rangle$ with the so-called “fundamental symmetry” operator $\calJ=\calJ^\dagger$ so that $\calJ^2=\bbbone$ (see subsection \ref{SubsecKrein}). This comes from the fact that, as shown in \cite{baum1981spin}, a physical description of spinor fields on pseudo-Riemannian manifolds comes naturally with Krein spaces, so the use of such a structure in the quest for a Lorentzian approach within noncommutative geometry, within the so called pseudo-Riemannian spectral triples, as first implemented in \cite{strohmaier2006noncommutative}. A physical example of Krein space is given in quantum field theory, where the indefinite inner product is deduced from the operator $\calJ=\gmpr^1$ where $\gmpr^1$ is the first/temporal gamma matrix in space-time signature (+,-,-,-). 

Such an indefinite inner product recently appeared in \cite{devastato2018lorentz} and \cite{martinetti2024torsion} in the context of twisted spectral triples, first introduced by Connes and Moscovici in \cite{connes2006type} (see subsection \ref{SubsecIntrodTW}). In this framework, the derivative $[\Dir, a]$ is replaced by its twisted version $[\Dir, a]_\rho\defeq \Dir a-\rho(a)\Dir$ with $\rho$ being a regular automorphism called twist. The twist $\rho$ allows for the definition of the so-called $\rho$-inner product (see equation \eqref{EqRhoprod}). The Krein product then appears as a particular realization of the $\rho$-inner product, induced by twists of the form $\rho(\,.\, )=\calJ (\,.\, ) \calJ$, thus unexpectedly linking the Krein product to the twist $\rho$.

This article proposes a new perspective on the signature problem, based on the following two considerations:
\begin{enumerate}
\item If we consider the usual Lorentzian Dirac operator $\LDir=-i\gmpr^a\nabla_a^{{\scriptscriptstyle PR, S}}$, the physical Dirac action is given by $\act_D=\langle \psi , \LDir \psi \rangle_{\gmpr^1}$. A physically pertinent approach to the noncommutative standard model must produce such an invariant. Interestingly, the operator defined by $\Dir\defeq \gmpr^1 \LDir$ is self-adjoint, so that we also have $\act_D=\langle \psi , \Dir \psi \rangle$.
\item In the Riemannian case, Connes’ reconstruction theorem gives strong reasons to believe in the axioms of spectral triples. A key requirement in this framework is the self-adjointness of
the Dirac operator.
\end{enumerate}
The proposed approach is an attempt to reconcile these two considerations.

The primary contribution of this paper is to highlight the conceptual connection between twisted (Riemannian) spectral triple and pseudo-Riemannian spectral triples. This connection is a bijection implemented by a fundamental symmetry, see theorem \ref{Thmconnection}. 

The second, closely related contribution concerns the understanding of this connection in geometric terms. In the commutative case of even-dimensional spin manifolds, we show how the transition from a pseudo-Riemannian spectral triple with metric $\gpr$ to the connected twisted spectral triple is related to a change of metric signature i.e. the distance of the corresponding twisted spectral triple is related to the Riemannian metric $\gr$, connected to $\gpr$ through a space-like reflection $r$, deduced from $\tw$, see theorem \ref{DistanceTW}. 

A key result for physics concerns the equality of fermionic and spectral actions along the connection, see corollary \ref{PropDualFermAct}. This may have consequence for the noncommutative standard model approach, as discussed in the conclusion. This result is obtained in a compact setting and should therefore be interpreted at a local or algebraic level, rather than as a statement about globally hyperbolic Lorentzian spacetimes.
 
 In sections \ref{SecST} and \ref{SecTwsp} we present the defining axioms of Spectral Triples ($ST$) and Twisted Spectral Triples (TST). The relation between the Krein inner product and the $\rho$-inner product is presented in subsection \ref{SubsecKrein}, leading to the definition of the fundamental twist.

Then in section \ref{SecDual}, taking the twist $\rho(\,.\, )=\tw (\,.\, ) \tw^\dagger$ with $\tw$ being a fundamental symmetry, we define the so called $\tw$-Pseudo-Riemannian Spectral Triple ($\tw$-$PRST$), the $\tw$-Twisted Spectral Triple ($\tw$-$TST$) and then the $\tw$-Twisted Pseudo-Riemannian Spectral Triple ($\tw$-$TPRST$) in order to present the two following connections:

\begin{itemize}
	\item \textbf{connection 1:} between $\tw$-$TPRST$ and $ST$ 
	\item \textbf{connection 2:} between $\tw$-$PRST$ and $\tw$-$TST$.
\end{itemize}
The key concept of $\tw$-morphism of (generalized) spectral triples is introduced in subsection \ref{SubSecTwMorph} and permits to present the transformation between two spectral triples as a $\tw$-morphism between them.

In section \ref{SecMFLD}, the case of the spectral triples of even-dimensional compact manifolds is presented, showing in particular how the $\tw$-morphism realize a local signature change (only for an odd number of dimensions). The corresponding action of this reflection at the level of the Clifford representation is then implemented by the twist, which becomes a generalized parity operator in this framework. These results motivate in particular the definition of the so-called twisted Clifford algebra (see definition \ref{DefTwCliff}). The particular example of the 4-dimensional compact spin manifold is presented in subsection \ref{SubSec4dlor}, with a discussion on the Dirac action and the induced spectral triple, in view of future applications within the noncommutative standard model of particle physics.

\newpage
\section{Spectral triples}
\label{SecST}
We recall in this section some important properties of spectral triples, in particular concerning the way to generate the fluctuations of the Dirac operator together with the associated spectral invariants. Taking a $*$-algebra $\calA$ with representation $\pi$ on a Hilbert space $\calH$, we will omit the symbol of the representation by identifying the elements of $\calA$ with their representation. The adjoint in $\calB(\calH)$ is given by $\dagger$ so that $\pi(a^*)=\pi(a)^\dagger$.
 
\begin{definition}[Spectral triple]
 	A spectral triple $(\algA, \calH,D)$ is the data of an involutive unital algebra
 	$\algA$ represented by bounded operators on a Hilbert space $\calH$, and of a self-adjoint operator $D$ acting on $\calH$ such that the resolvent $(i + D^2)^{-1}$ is compact and that for any $a\in\algA$, $[D, a]$ is a bounded operator.
 \end{definition}
A real and even spectral triple $(\algA, \hs, \Dir, \Gamma, J)$ is defined by the introduction of two operators $\Gamma$ and $J$ in the spectral triple, such that $J$ is an anti-unitary operator and $\Gamma$ is a $\bbZ_2$-grading on $\hs$ satisfying $\Gamma^2 = \bbbone$ and $\Gamma^\dagger = \Gamma$ so that
\begin{equation}
J^2 = \epsilon,\qquad J D = \epsilon ' D J,\qquad J \Gamma = \epsilon'' \Gamma J,\qquad \Gamma \Dir + \Dir \Gamma = 0,\quad\text{and}\quad \Gamma a = a\Gamma
\end{equation}
$\forall a\in\algA$, with $\epsilon, \epsilon ', \epsilon ''=\pm 1$. 

In addition, the commutant property $[a, J b^\dagger J^{-1}] = 0$ and the first-order condition
\begin{equation}
	\label{FirstOrd}
[[D, a], J b^\dagger J^{-1}] = 0
\end{equation}
 must be satisfied $\forall a,b\in\calA$.\\
 
The derivative is given by $\delta(.)=[\Dir,\,.\,]$ so that the $\algA$-bimodule of Connes's differential one-forms is given by $\Omega_{\Dir}^{1}(\algA)\defeq\left\{\sum_{k} a_{k}[\Dir, b_{k}]: a_{k}, b_{k} \in \algA\right\}$. In this way, a one-form $A\in \Omega_{\Dir}^{1}(\algA)$ gives rise to a fluctuated Dirac operator defined by the following:
\begin{equation}
\Dir_A \defeq D + A + \epsilon' J A J^{-1}.
\end{equation}
 Taking $u \in \calU(\algA)$, we can define the adjoint unitary $U = u J u J^{-1}$ so that the inner fluctuation of $\Dir_A$ becomes equivalent to a gauge transformation of the one-form $A$:
 \begin{equation}
 (\Dir_A)^u=U\Dir_A U^\dagger= \Dir_{A^u}\qquad \text{with}\qquad A^u \defeq u A u^ \dagger + u \delta(u^ \dagger).
 \end{equation} 
Taking this, a way to obtain spectral invariants is given by the spectral and fermionic actions. The spectral action, introduced in \cite{chamseddine1997spectral} is given by:
\begin{align}
	\label{eqAction}
	\act[\Dir_A]
	&\defeq \Tr f(\Dir_A \Dir_A^\dagger / \Lambda^2)
\end{align}
with $f:\mathbb{R}^+\,\to\, \mathbb{R}^+$ a positive and even function making $f(\Dir_A \Dir_A^\dagger / \Lambda^2)$ be a trace-class operator, which decays at $\pm\infty$, $\Lambda\in\mathbb{R}^+$ being a cutoff parameter, see \cite{chamseddine2010noncommutative}. 

A fermionic action is defined by: 
\begin{equation}
\act_f( \Dir_A, \psi)\defeq \langle \psi , \Dir_A \psi \rangle\qquad\quad \text{with}\qquad\quad  \psi\in\calH.
\end{equation}

\newpage
\section{Twisted spectral triples}
\label{SecTwsp}

First introduced by A. Connes and H. Moscovici in \cite{connes2006type}, twisted spectral triples were then implemented in the context of the noncommutative standard model by A. Devastato and P. Martinetti in \cite{TwistSpontBreakDevastaMartine2017} then developed in \cite{TwistLandiMarti2016} and \cite{TwistGaugeLandiMarti2018} by G. Landi and P. Martinetti. This approach was first motivated by the hope that it can generate the missing scalar field necessary for the prediction of the Higgs mass, see \cite{TwistSpontBreakDevastaMartine2017}, \cite{filaci2021minimal}, and then that it can be connected with a pseudo-Riemannian geometry (see \cite{devastato2018lorentz}), and generate a Lorentz invariant fermionic action \cite{martinetti2022lorentzian}. Surprisingly, it was shown in \cite{martinetti2024torsion} that an orthogonal and geodesic preserving torsion is generated by the so called twisted fluctuation of the Dirac operator (defined later in equation \eqref{EqTwFluct}) on a 4-dimensional closed Riemannian spin manifold. In this context, a crucial role is played by the so-called $\rho$-unitary operators (see equation \eqref{EqRHoUnit}) which turn out to generate torsion and to implement Lorentz symmetry (see \cite{martinetti2024torsion}).

\subsection{Introduction to twisted spectral triples}
\label{SubsecIntrodTW}

\begin{definition}[Twisted spectral triple]
A twisted spectral triple is obtained by considering a triple $(\calA, \calH, \Dir)$ where $\calA$ is an involutive algebra acting as a bounded operator algebra on a Hilbert space $\calH$ with a self-adjoint operator $\Dir$ together with a regular automorphism $\rho\in \Aut(\calA)$ (i.e. satisfying $\rho(a^\dagger)=(\rho^{-1}(a))^\dagger\, \forall a\in\calA$) called twist. The twisted commutator
\begin{equation}
[\Dir,a]_\rho:= \Dir a-\rho(a)\Dir
\end{equation}
is used instead of the usual commutator $[\Dir,a]$ and we require only the twisted one to be bounded. A given twisted spectral triple is specified by the set $(\calA, \calH, \Dir, \rho)$.
\end{definition} 
We can then define the twisted derivation $\delta_\rho (a):= [\Dir,a]_\rho$ so that the twisted Leibniz rule $\delta_\rho (ab)=\rho(a)\delta_\rho (b)+\delta_\rho (a)b$ is verified $\forall a,b\in \calA$.

\begin{definition}[$\calB(\calH)$-regular automorphism]
	Given a Hilbert space $\calH$, an automorphism $\rho$ is said to be $\calB(\calH)$-regular if $\rho$ is an automorphism of $\calB(\calH)$ which satisfies the regularity condition on all $\calB(\calH)$.
\end{definition}

In the following, we require $\rho\in \Aut(\calA)$ to be $\calB(\calH)$-regular. Since \cite{devastato2018lorentz}, a novelty consists in the possibility of considering the so-called $\rho$-twisted product $\langle\, . \, , \, . \,   \rangle_\rho$, which by definition satisfies
\begin{equation}
	\label{EqRhoprod}
	\langle\psi   , O\psi^\prime   \rangle_\rho=\langle\rho(O)^\dagger\psi ,\psi^{\prime} \rangle_\rho\qquad \forall \psi, \psi^{\prime} \in \calH\qquad \text{and} \qquad  \forall O\in \calB(\calH).
\end{equation}
See Remark \ref{rqorigininner} for the justification of this choice.\\

Taking any $O\in\calB(\calH)$ and $\psi, \psi^\prime\in\calH$, the right and left adjoints are then defined by
\begin{align}
	&\langle O \psi , \psi^\prime \rangle_\rho=\langle  \psi , \rho^{-1}(O^\dagger) \psi^\prime \rangle_\rho\defeq \langle  \psi , O^{\dagger_R}\, \psi^\prime \rangle_\rho, \\
	&\langle  \psi , O \psi^\prime \rangle_\rho=\langle \rho(O)^\dagger  \psi , \psi^\prime \rangle_\rho\defeq \langle O^{\dagger_L}\, \psi , \psi^\prime \rangle_\rho.
\end{align}

The requirement that $\rho$ is $\calB(\calH)$-regular (resp. not $\calB(\calH)$-regular) then rephrases as the condition that the left and right adjoints satisfy $(\, .\,)^{\dagger_L}=(\, .\,)^{\dagger_R}$ (resp. $(\, .\,)^{\dagger_L}\neq(\, .\,)^{\dagger_R}$), inducing the uniqueness of the adjoint. The associated (unique) adjoint, called the $\rho$-adjoint, is then defined by
\begin{equation}
	(\, .\,)^+:= \rho(\, .\, )^\dagger.
\end{equation}

The proposition below establishes a fundamental link between the twist's regularity condition and the Hermitian structure of $\langle\, . \, , \, . \, \rangle_\rho$. 

\begin{proposition}
	If $\langle\, . \, , \, . \, \rangle_\rho$ is Hermitian, then $\rho$ is $\calB(\calH)$-regular. Conversely, if $\rho$ is not $\calB(\calH)$-regular, then $\langle\, . \, , \, . \, \rangle_\rho$ is not Hermitian.
\end{proposition}
\begin{proof}
	Taking any $O\in\calB(\calH)$ and $\psi, \phi\in\calH$, if $\langle\, . \, , \, . \, \rangle_\rho$ is Hermitian, then 
	\begin{equation}
		\langle  O^{\dagger_L}\psi , \phi \rangle_\rho=\langle  \psi , O\phi \rangle_\rho=\overline{\langle  O\phi , \psi \rangle}_\rho=\overline{\langle  \phi , O^{\dagger_R}\psi \rangle}_\rho=\langle O^{\dagger_R}\psi , \phi\rangle_\rho,
	\end{equation}
	which implies that $(\, .\,)^{\dagger_L}=(\, .\,)^{\dagger_R}$, and thus $\rho$ is $\calB(\calH)$-regular.
	
	If $\rho$ is not $\calB(\calH)$-regular, then $\langle  O^{\dagger_L}\psi , \phi \rangle_\rho\neq \langle O^{\dagger_R}\psi , \phi\rangle_\rho$. Now suppose that $\langle\, . \, , \, . \, \rangle_\rho$ is Hermitian. This means that 
	\begin{align}
		&\langle  O^{\dagger_L}\psi , \phi \rangle_\rho=\overline{\langle  O\phi , \psi \rangle}_\rho,\\
		&\langle O^{\dagger_R}\psi , \phi\rangle_\rho=\overline{\langle  \phi , O^{\dagger_R}\psi \rangle}_\rho=\overline{\langle  O\phi , \psi \rangle}_\rho
	\end{align}
	inducing the contradiction $\overline{\langle  O\phi , \psi \rangle}_\rho\neq \overline{\langle  O\phi , \psi \rangle}_\rho$, implying that $\langle\, . \, , \, . \, \rangle_\rho$ is not Hermitian. 
\end{proof}

As a consequence, the regularity condition is a necessary but not a sufficient condition for the inner product $\langle\, . \, , \, . \, \rangle_\rho$ to be Hermitian. In the case where $\rho\in\Inn(\calB(\calH))$, a sufficient condition will be given in Proposition \ref{PropTwistProd}. This underlines the deep connection between twists and Hermitian products.
 
 An operator $O\in \calB(\calH)$ is called “$\rho$-self-adjoint” if $O^+=O$. A $\rho$-unitary is an operator $U_\rho$ which is unitary with respect to the $\rho$-twisted product:
 \begin{equation}
 	\label{EqRHoUnit}
 	U_\rho U_\rho^+=U_\rho^+U_\rho=\bbbone.
 \end{equation}
 The set of all $\rho$-unitaries will be denoted by $\calU_\rho(\calB(\calH))$.
 
 \begin{remark}
 	$\calU_\rho(\calB(\calH))$ is the natural symmetry group for the $\rho$-twisted product, since the scalars $\langle \psi, \psi \rangle_\rho$ for $\psi\in\calH$ are preserved under the transformation $\psi\,\to\, U_\rho\psi$:
 	\begin{equation}
 		\langle U_\rho\psi, U_\rho\psi \rangle_\rho=\langle  \psi, U_\rho^+U_\rho\psi \rangle_\rho=\langle \psi, \psi \rangle_\rho. 
 	\end{equation}
 \end{remark}
 We will restrict our analysis to twists that are $*$-automorphisms, i.e., $\rho(O^\dagger)=\rho(O)^\dagger$. This condition, together with the regularity assumption, implies that $\rho^2=\bbbone$, which in turn ensures that $(\, . \,)^+$ is an involution. The axioms of twisted spectral triples are the same as those of standard spectral triples, except for the introduction of an additional characteristic number $\epsilon^{\prime\prime\prime}=\pm 1$ such that $\,\rho( J)=\epsilon^{\prime\prime\prime}J$, and the replacement of the first-order condition by the twisted first-order condition 
\begin{equation}
	\label{TwFirstorder}
[[\Dir, a]_\rho, b^\circ]_{\rho^\circ}=0 
\end{equation}
$\forall a\in\calA$ and $b^\circ \in\calA^\circ$ with $\calA^\circ$ the opposite algebra, with opposite twist $\rho^\circ$ acting on $\calA^\circ$ defined by $\rho^\circ(a^\circ):= (\rho^{-1}(a))^\circ$ (see \cite{TwistLandiMarti2016}). Twisted 1-forms are given by elements $A_\rho=\sum_ka_k[\Dir, b_k]_{\rho}$ with $a_k, b_k \in \calA$ so that twisted-fluctuations of $\Dir$ are given by
\begin{equation}
	\label{EqTwFluct}
	\Dir_{A_\rho}:= \Dir+A_\rho+\epsilon^\prime JA_\rho J^{-1}.
\end{equation}
As shown in \cite{TwistGaugeLandiMarti2018} and \cite{martinetti2024torsion}, there are two ways to generate such twisted-fluctuations:
\begin{enumerate}
	\item Type 1: $\Dir_{A_\rho}=U\Dir U^+$ with $U=uJuJ^{-1}$ and $u$ a unitary operator in $\calA$.
	\item Type 2: $\Dir_{A_\rho}=U_\rho\Dir U_\rho^\dagger$ with $U_\rho=u_\rho Ju_\rho J^{-1}$ and $u_\rho$ a $\rho$-unitary operators in $\calA$.
\end{enumerate}
We refer to \cite{TwistGaugeLandiMarti2018} for the understanding of the connection with Morita equivalence.

\begin{remark}
	\label{rqorigininner}
	The fact that $U\Dir U^+$ preserves $\rho$-self-adjointness but not the usual self-adjointness was the original motivation for introducing the $\rho$-inner product in \cite{devastato2018lorentz}. The discovery of an alternative method to generate twisted fluctuations, where $U_\rho$ is $\rho$-unitary (preserving self-adjointness and producing skew torsion terms), through the use of a $\rho$-unitary operator in \cite{martinetti2024torsion}, shows that both inner products can be used in this framework.
\end{remark}

Following \cite{martinetti2022lorentzian}, the spectral action remains the same as for standard spectral triples, except that $\Dir_{A_\rho}$ is used instead of $\Dir_{A}$. The fermionic action then becomes the $\rho$-twisted fermionic action, defined by:
\begin{equation}
	\act_f^\rho(\Dir_{A_\rho}, \psi)\defeq \langle \psi , \Dir_{A_\rho} \psi \rangle_\rho\qquad\quad \text{with}\qquad\quad  \psi\in\calH.
\end{equation} 

Starting from a graded spectral triple $(\algA, \calH, \Dir, \Gamma)$, we ask how to twist this spectral triple in a minimal way, i.e., without modifying the Hilbert space or the Dirac operator, but only the algebra. A motivation for twisting the spectral triple in this way comes from the fact that the fermionic content of the Standard Model of particle physics is encoded within $\Dir$ and $\calH$, making any modification to these fundamental elements more risky than doubling the algebra as done below, see \cite{TwistSpontBreakDevastaMartine2017, TwistLandiMarti2016} for more arguments.
 
As shown in \cite{TwistLandiMarti2016} a way to  “minimally twist”\footnote{Without changing the Hilbert space and the Dirac operator} $(\algA, \calH, \Dir, \Gamma)$ is by the use of the decomposition coming with $\Gamma$. Using the projections $p_\pm:=  (\bbbone \pm \Gamma)/2$ so that $\calH=p_+\calH\oplus p_-\calH:=\calH_+\oplus \calH_-$, all elements $a=(a_1,a_2)\in \algA\otimes \mathbb{C}^2$ are represented accordingly:
\begin{equation}
	\label{EqRhogen}
	\pi(a_1,a_2)\defeq p_+\pi_0(a_1)+p_-\pi_0(a_2)=	\begin{pmatrix}
		\pi_+(a_1) & 0  \\
		0 & \pi_-(a_2)  \\
	\end{pmatrix}
\end{equation}
with $\pi_0$ the representation used for $\algA$ and $\pi_\pm(\, .\, )\defeq p_\pm \pi_0(\, .\, )_{|\calH_\pm}$ the restrictions to $\calH_\pm$. This will be called the chiral representation.
\begin{definition}[twist by grading]
	Taking the graded spectral triple $(\algA, \calH, \Dir, \Gamma)$, its corresponding twist by grading is given by the twisted spectral triple $(\algA\otimes \mathbb{C}^2 , \calH, \Dir, \Gamma, \rho )$, using the chiral representation. The twist acts on elements $a=(a_1,a_2)\in \algA\otimes \mathbb{C}^2$ as
	\begin{equation}
		\label{EqactionRho}
		\rho(a_1,a_2)= (a_2,a_1).
	\end{equation}  
\end{definition} 
Now consider the spectral triple for a $2m$-dimensional manifold $\Man$ with $m\in\mathbb{N}$. The Hilbert space $\calH$ can be split according to the two “chiral” eigenspaces of $\Gamma$: $\calH=\calH_+\oplus \calH_-$. In this way, any element $a\in C^\infty(\Man)$ acts as $a=(f,f)\equiv (f\bbbone_{2^{m-1}}, f\bbbone_{2^{m-1}})$\footnote{This comes from the constraint that the grading must commute with the algebra.} on elements $\psi=(\psi_+, \psi_-)$ with $\psi_\pm\in \calH_\pm$ i.e. in a symmetrical way, according to $\calH_\pm$. The doubled algebra is $\calA\defeq C^\infty(\Man)\otimes \mathbb{C}^2$ whose elements are given by $a=(f, g)$ where $\calA^\prime\defeq C^\infty(\Man)$ corresponds to its restriction to elements so that $f=g$. We have $\forall a\in\calA$ and $b\in\{1, \dots, 2m\}$ that the gamma matrices twist commutes with the algebra, i.e. $[\gmr^b, a]_\rho=0$ where the twist is given by the action $\rho(a)\equiv\rho(f,g)=(g,f)$. Then, taking $\psi\in\calH$, we have that $\Dir(a\psi)=\rho(a)\Dir(\psi)+\Dir(a)\psi$  so that the commutator with the Dirac operator is given by 
\begin{equation}
	[\Dir, a]=\Dir(a)+(\rho(a)-a)\Dir.
\end{equation}
This commutator is now unbounded since the last term $(\rho(a)-a)\Dir$ isn't equal to $0$ on $\calA$, and does not provide the desired differential $\Dir (a)=-ic(d(a))$ where $c$ is the Clifford action. We are left with two possibilities to obtain a bounded commutator. Either we restrict to $\calA^\prime$ i.e. the sub-algebra of $\calA$ containing elements $a$ satisfying $\rho(a)=a$ or we can replace the commutator by the twisted commutator
\begin{align}
	[\Dir, a]_\rho=\Dir a-\rho(a)\Dir= \Dir (a)
\end{align}
which is indeed bounded and gives the differential. This is the only way to obtain a bounded commutator in this case (see \cite{TwistLandiMarti2016}). 
\begin{remark}
It is interesting to notice that the doublet $(\calA, [\Dir, \, .\,]_\rho)$ is a natural extension of the one of the previous case $(\calA^\prime, [\Dir, \, .\,])$ since $[\Dir, a]_\rho=[\Dir, a]$ for $a\in\calA^\prime$. Twisted spectral triples are then natural extensions of spectral triples, which correspond to the restriction of the twisted ones to the sub-algebra $\calA^\prime$ one which the twist acts trivially, i.e. $\rho(a)=a$ for any $a\in \calA^\prime$. It is therefore by considering algebras that distinguish the $\calH_\pm$ chiral spaces that twisted spectral triples came naturally.
\end{remark}

\subsection{Twisted product when $\rho$ is an inner automorphism} 

In the case $\rho\in\Inn(\calB(\calH))$, there is a unitary operator $\tw$ acting on $\calH$ such that $\rho(O)=\tw O \tw^\dagger$ and the $\rho$-twisted product can then be defined from the hermitian inner product $\langle\, . \, , \, . \, \rangle$ and $\tw$ such that $\langle\, . \, , \, . \, \rangle_\rho := \langle\, . \, , \tw . \, \rangle$. In this setting, the twist defines a $*$-automorphism of $\calA$. In the following, any twist $\rho$ will be an element of $\Inn(\calB(\calH))$. Since different unitary $\tw$ can generate the twist, we will take more adapted notations to specify the algebraic structures. The twisted spectral triple $(\calA, \calH, \Dir, \rho)$ will then be denoted as $(\calA, \calH, \Dir, \tw)$, turning the $\rho$-inner product notation $\langle\, . \, , \, . \, \rangle_\rho$ into the $\tw$-inner product one. 
\begin{equation}
	\label{Eqtwprod}
	\langle\, . \, , \, . \, \rangle_\tw := \langle\, . \, , \tw . \, \rangle
\end{equation}
so that the associated adjoint is called $\tw$-adjoint and is defined by
\begin{equation}
	\label{EqDefAdjInn}
(\, .\,)^\Tadj:= \tw(\, .\,)^\dagger\tw^\dagger
\end{equation}
and $\tw$-unitary operators $U_\tw U_\tw^\Tadj=U_\tw^\Tadj U_\tw=\bbbone$ with corresponding set called $\calU_\tw(\calB(\calH))$.

\begin{proposition}
	\label{PropNormtwAdj}
	Taking the norm $\|(\, .\,)\|$ induced by $\langle\,   \, ,   \, \rangle$, for any operator $O\in\calB(\calH)$ we have that $\|O\|=\|O^\Tadj\|$.
\end{proposition}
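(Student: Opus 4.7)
The plan is to exploit unitarity of $\tw$. By definition $A^\Tadj = \tw A \tw^\dagger$, and since the preceding paragraph establishes $\tw$ as a unitary operator on $\calH$ implementing the inner twist $\rho$, we have $\tw^\dagger \tw = \tw \tw^\dagger = \bbbone$. The map $A \mapsto \tw A \tw^\dagger$ is therefore conjugation by a unitary, which is well known to be an isometric automorphism of $\calB(\calH)$ in operator norm; this is exactly the desired equality.

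Concretely I would proceed in two short steps. First, from $\tw^\dagger\tw = \bbbone$ one obtains $\|\tw \xi\| = \|\tw^\dagger \xi\| = \|\xi\|$ for every $\xi \in \calH$, so that for all $\psi\in\calH$
\[
\|A^\Tadj \psi\| \;=\; \|\tw A \tw^\dagger \psi\| \;=\; \|A \tw^\dagger \psi\|.
\]
Second, since $\tw^\dagger$ maps the unit sphere of $\calH$ bijectively onto itself, the change of variable $\phi = \tw^\dagger \psi$ gives
\[
\|A^\Tadj\| \;=\; \sup_{\|\psi\|=1}\|\tw A \tw^\dagger \psi\| \;=\; \sup_{\|\phi\|=1}\|A \phi\| \;=\; \|A\|,
\]
which is the proposition.

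There is no substantive obstacle here: the statement is essentially a reminder of the standard fact that unitary conjugation is an isometry on $(\calB(\calH),\|\cdot\|)$. It is nevertheless worth recording at this point, because in subsequent sections it will be used together with the $C^*$-identity $\|B^\dagger\|=\|B\|$ to freely interchange $\Tadj$ and $\dagger$ when estimating norms of twisted fluctuations and of the operators $U_\tw \Dir U_\tw^\Tadj$ generated by $\tw$-unitaries.
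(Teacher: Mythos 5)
Your argument is correct and runs on the same idea the paper uses: unitarity of $\tw$ makes conjugation by $\tw$ an isometry in operator norm. One small thing worth flagging is a definitional subtlety. You take the displayed formula $(\,.\,)^{\Tadj} := \tw(\,.\,)\tw^\dagger$ at face value, so your chain $\|A^{\Tadj}\psi\| = \|\tw A\tw^\dagger\psi\| = \|A\tw^\dagger\psi\|$ closes without any further ingredient. The paper's own proof, however, treats $A^{\Tadj}$ as $\tw A^\dagger\tw^\dagger$, which is what the general $\rho$-adjoint $(\,.\,)^+ := \rho(\,.\,)^\dagger$ gives when $\rho(\,.\,)=\tw(\,.\,)\tw^\dagger$; this is also what is used in the proof of Proposition \ref{DualDir}. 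Under that reading you would arrive at $\|A^{\Tadj}\| = \|A^\dagger\|$ and need one extra step, the $C^*$-identity $\|A^\dagger\|=\|A\|$, to conclude. You clearly know this identity (you cite it in your closing remark), so the proof is in substance the same as the paper's; you just need to make sure your expansion of $A^{\Tadj}$ matches the convention actually in force, i.e.\ insert the $\dagger$ on $A$ and then invoke $\|A^\dagger\|=\|A\|$ at the end.
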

\begin{proof}
	We have that $\|O^\Tadj\|=\|\tw O^\dagger\tw^\dagger\|=\| O^\dagger\tw^\dagger\|$ since $\tw$ is a unitary operator. Then using the relation $\|(\, .\, )\|=\|(\, .\, )^\dagger\|$ we get that $\|O^\Tadj\|=\|\tw O\|=\| O\|$.
\end{proof}

\begin{lemma}
	\label{PropRegImpFund}
Taking $\rho\in\Inn(\calB(\calH))$ implemented by the unitary $\tw$, If $\calA$ is a simple finite-dimensional algebra or if $\rho$ is $\calB(\calH)$-regular, then it follows that
\begin{equation}
	\label{EqFormTw}
\tw=\exp(i\theta)\tw^\dagger
\end{equation}
for $\theta$ a real number. 
\end{lemma}
 \begin{proof}
Lets take an operator $O\in \calB(\calH)$, we have that $\rho(O^\dagger)=\tw O^\dagger \tw^\dagger$ and $\rho^{-1}(O)=\tw^\dagger O \tw$. The regularity condition $\rho(O^\dagger)=(\rho^{-1}(O))^\dagger$ implies $\tw O^\dagger \tw^\dagger= \tw^\dagger O^\dagger \tw$. The last equality is equivalent to $O^\dagger=(\tw^\dagger)^2 O^\dagger(\tw)^2$ meaning that $(\tw)^2$ is in the commutant of $\calB(\calH)$ and then is of the form $\lambda \bbbone$ for $\lambda\in\mathbb{C}$. We then have $(\tw)^2=\lambda \bbbone$ and $(\tw^\dagger)^2=\bar\lambda \bbbone$ so that $\tw=\sqrt{\lambda}\tilde{\tw}$ and $\tw^\dagger=\sqrt{\bar\lambda}\tilde{\tw}^\dagger$ with $(\tilde{\tw})^2=(\tilde{\tw}^\dagger)^2=\bbbone$ and $\lambda\bar\lambda=1$. Moreover, we have that $\tw\tw^\dagger =\tilde{\tw}\tilde{\tw}^\dagger=\bbbone$ implying that $\tilde{\tw}=\tilde{\tw}^\dagger$ so that taking $\lambda=\exp(i\theta)$, we are left with $\sqrt{\lambda^{-1}}\tw=\sqrt{\bar{\lambda}^{-1}}\tw^\dagger$ which gives $\tw=\exp(i\theta)\tw^\dagger$. The conclusion is the same if $O$ is replaced by an element of a simple finite dimensional algebra $\calA$ since its commutant is also proportional to identity.
 \end{proof}
\begin{corollary}
If $\calA$ is a simple finite dimensional algebra, and $\rho$ is regular on $\calA$, then $\rho$ is $\calB(\calH)$-regular.
\end{corollary}
\begin{proof}
It was shown in lemma \ref{PropRegImpFund} that $\tw=\exp(i\theta)\tw^\dagger$ in this case. As a consequence $\rho(\, .\,)=\tw(\, .\,) \tw^\dagger=\tw^\dagger(\, .\,) \tw=\rho^{-1}(\, .\,)$ and then $\rho((\, .\,)^\dagger)=\rho^{-1}((\, .\,)^\dagger)=\tw^\dagger (\, .\,)^\dagger\tw =(\tw^\dagger (\, .\,)\tw)^\dagger=\rho^{-1}(\, .\,)^\dagger$ so that $\rho$ is also $\calB(\calH)$-regular.
\end{proof}

\begin{proposition}
	\label{PropInnTw}
	In the case of the twist by grading, we have that $\rho\in\Inn(\calB(\calH))$ so that $\rho(\,.\,)=\tw(\,.\,)\tw^\dagger$ with $\tw$ a unitary operator in $\calB(\calH)$ such that
	\begin{equation}
		\label{EqForTwGrad}
		\tw =	\begin{pmatrix}
			0 & \tw_2 \\
			\tw_3 & 0 \\
		\end{pmatrix}
	\end{equation}
	with $\tw_2$ and $\tw_3$ being unitaries so that $\forall a=(a_1,a_2)\in \algA\otimes \mathbb{C}^2$ we have $[\tw_2, a_2]=[\tw_3, a_1]=0$.
\end{proposition}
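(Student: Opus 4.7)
My plan is to work in the chiral decomposition $\calH = \calH_+ \oplus \calH_-$ induced by $\Gamma$ and to represent every bounded operator as a $2\times 2$ block matrix with respect to this splitting. Writing a general unitary $\tw$ with blocks $\tw_{ij}$, I would compute $\tw\,\pi(a_1,a_2)\,\tw^\dagger$ block-by-block using the chiral representation \eqref{EqRhogen}, then compare the result with $\pi(a_2,a_1)=\mathrm{diag}(\pi_+(a_2),\pi_-(a_1))$, which is what $\rho$ applied to $\pi(a_1,a_2)$ must yield by \eqref{EqactionRho}.

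Next, I would extract the vanishing of the diagonal blocks of $\tw$ by inserting carefully chosen test elements. Taking $(a_1,a_2)=(\bbbone,0)$ and matching the $(1,1)$-block of the conjugation equation gives $\tw_{11}\tw_{11}^\dagger=0$, hence $\tw_{11}=0$; symmetrically, $(0,\bbbone)$ forces $\tw_{22}=0$. Relabelling the two surviving blocks as $\tw_2$ (upper-right) and $\tw_3$ (lower-left), the unitarity condition $\tw\tw^\dagger=\tw^\dagger\tw=\bbbone$ reduces to $\tw_2\tw_2^\dagger=\bbbone_{\calH_+}$, $\tw_3\tw_3^\dagger=\bbbone_{\calH_-}$ together with their adjoint versions, so both $\tw_2:\calH_-\to\calH_+$ and $\tw_3:\calH_+\to\calH_-$ are unitary isomorphisms. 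Matching the diagonal blocks of the conjugation equation then yields the intertwining relations $\tw_2\,\pi_-(a_2)\,\tw_2^\dagger=\pi_+(a_2)$ and $\tw_3\,\pi_+(a_1)\,\tw_3^\dagger=\pi_-(a_1)$, which, under the paper's identification of an algebra element with its chiral representative, are precisely $[\tw_2,a_2]=0$ and $[\tw_3,a_1]=0$. A check on the two off-diagonal blocks of the conjugation equation shows they vanish automatically once $\tw_{11}=\tw_{22}=0$.

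The main subtlety, more than an obstacle, is the semantics of these commutators: since $\tw_2$ and $\tw_3$ intertwine two distinct Hilbert subspaces, $[\tw_2,a_2]=0$ must be read as the intertwiner identity $\tw_2\,\pi_-(a_2)=\pi_+(a_2)\,\tw_2$ between the two chiral representations of a single element of $\algA$. The existence of such intertwiners presupposes that $\pi_+$ and $\pi_-$ are unitarily equivalent, a compatibility built into the setting where the grading-swap twist can be implemented by inner conjugation; in the concrete examples (e.g.\ $\algA=C^\infty(\Man)$ with both $\pi_\pm$ being multiplication representations on $2^{m-1}$-component spinor fields) this equivalence is manifest, so one can simply take $\tw_2$ and $\tw_3$ to be the canonical identifications of the two half-spinor bundles.
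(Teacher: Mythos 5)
Your proof takes the same block-matrix route as the paper: compare $\tw\,\pi(a)\,\tw^\dagger$ with $\pi(\rho(a))$ in the chiral decomposition, kill the diagonal blocks, use unitarity of $\tw$ to conclude the off-diagonal blocks are unitary, and then read off the commutation relations. You simply make explicit two points the paper leaves terse — the test elements $(\bbbone,0)$ and $(0,\bbbone)$ that force $\tw_{11}=\tw_{22}=0$, and the fact that $[\tw_2,a_2]=0$ really means the intertwiner identity $\tw_2\,\pi_-(a_2)=\pi_+(a_2)\,\tw_2$, which presupposes $\pi_+\cong\pi_-$ — both of which are correct and sharpen the paper's argument rather than depart from it.
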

\begin{proof}
	Lets suppose that $\rho(a)=\tw a \tw^\dagger$ with $a=(a_1,a_2)$ and
	\begin{equation}
		\tw =	\begin{pmatrix}
			\tw_1 & \tw_2 \\
			\tw_3 & \tw_4  \\
		\end{pmatrix}
	\end{equation}
	in the chiral representation. Then the requirement that  $\rho(a_1,a_2)= (a_2,a_1)$ implies that $\tw_1=\tw_4=0$ so that $\tw_2 a_2 \tw_2^\dagger =a_2$ and $\tw_3 a_1 \tw_3^\dagger =a_1$.  Using that $\tw$ is unitary, we then get that $\tw_2$ and $\tw_3$ are also unitaries. The fact that $\tw_2 a_2 \tw_2^\dagger =a_2$ and $\tw_3 a_1 \tw_3^\dagger =a_1$ implies that $[\tw_2, a_2]=[\tw_3, a_1]=0$ for all $a\in \algA\otimes \mathbb{C}^2$, hence the result of equation \eqref{EqForTwGrad}.
\end{proof}
A trivial and always existing solution is given by taking $\tw_2=\tw_3=\bbbone$. 

%
 
\begin{definition}[Trivial twist]
	Taking an algebra $\calA$, a twist $\rho$ is said to be trivial on $\calA$ if $\rho(a)=a$ for any $a\in\calA$.
\end{definition}
An example of a trivial twist is given by the twist by the grading on $\calA^\prime= C^\infty(\Man)$ which contains elements $a=(f,f)$. Another example is the twist $\rho(\, .\,)=\tw(\, .\,)\tw^\dagger$ where $\tw=\bbbone$. 

Taking $\rho$ to be an inner $\calB(\calH)$-regular automorphism implemented by the unitary $\tw$, the following proposition highlights how the restriction of the corresponding twisted inner products to the Hermitian product leads to the fact that $\tw=\tw^\dagger$, inducing that $\langle\, . \, , \, . \, \rangle_\tw$ is an indefinite inner product.

\begin{proposition}
	\label{PropTwistProd}
	Any non-trivial $\calB(\calH)$-regular twist $\rho$ implemented by a unitary operator $\tw$ induces a $\tw$-inner product $\langle\, . \, , \, . \, \rangle_\tw = \langle\, . \, , \tw . \, \rangle$ which is Hermitian and indefinite if and only if $\tw=\tw^\dagger$. In this case, $\langle \, .\, , \, .\, \rangle_{\tw}$ is Hermitian and indefinite.
\end{proposition}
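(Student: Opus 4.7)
The plan is to split the statement into two claims and address them in sequence: (i) indefiniteness of $\langle\cdot,\cdot\rangle_\tw$ forces $\tw=\tw^\dagger$, and (ii) under $\tw=\tw^\dagger$, non-triviality of $\rho$ yields a genuine Krein decomposition with fundamental symmetry $\tw$.

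For (i), I would first impose Hermiticity on the sesquilinear form $\langle\psi,\phi\rangle_\tw=\langle\psi,\tw\phi\rangle$. The identity $\overline{\langle\phi,\psi\rangle_\tw}=\langle\psi,\phi\rangle_\tw$ translates, via $\langle\tw\psi,\phi\rangle=\langle\psi,\tw^\dagger\phi\rangle$, into the requirement $\tw=\tw^\dagger$. Since a Hermitian (hence indefinite) inner product must be conjugate-symmetric, this condition is necessary. Combined with equation \eqref{EqFormTw}, i.e. $\tw=\exp(i\theta)\tw^\dagger$, this additionally pins down $\exp(i\theta)=1$.

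For (ii), from $\tw=\tw^\dagger$ and the unitarity of $\tw$ one immediately obtains $\tw^2=\tw\tw^\dagger=\bbbone$, so $\tw$ already meets the defining conditions of a fundamental symmetry. The spectral theorem then yields an orthogonal decomposition $V=V_+\oplus V_-$ into the closed $\pm 1$-eigenspaces of $\tw$. On each eigenspace one computes $\langle\psi,\psi\rangle_\tw=\pm\langle\psi,\psi\rangle$, so $\langle\cdot,\cdot\rangle_\tw$ is positive definite on $V_+$ and negative definite on $V_-$. Completeness of $V_\pm$ for the induced norm is automatic, since they are closed subspaces of the ambient Hilbert space and, restricted to $V_\pm$, the Krein norm coincides with the Hilbert norm.

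The subtle step is ensuring both $V_+$ and $V_-$ are nontrivial, which is exactly what makes the form indefinite rather than semidefinite. If $V_-=\{0\}$ then $\tw=\bbbone$, while if $V_+=\{0\}$ then $\tw=-\bbbone$; in either case $\rho(a)=\tw a\tw^\dagger=a$ for every $a$, contradicting the non-triviality hypothesis on $\rho$. Hence both eigenspaces are nonzero and $(V,\langle\cdot,\cdot\rangle_\tw)$ is a genuine Krein space with fundamental symmetry $\tw$. I expect the main hurdle to be not the calculations, which are short, but clarifying what \emph{indefinite} means here — in particular, ensuring that non-triviality of $\rho$, rather than merely non-triviality of $\tw$, is what rules out the degenerate $\pm\bbbone$ signatures.
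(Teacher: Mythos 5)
Your proof is correct, and your treatment of part (i) takes a genuinely different (and arguably cleaner) route than the paper's. The paper establishes $\tw=\tw^\dagger$ by spectral means: it writes $\tw^2 = e^{i\theta}\bbbone$, constructs the projections $P_\pm = (\bbbone\pm e^{-i\theta/2}\tw)/2$, finds eigenvectors $\psi_\pm$ with eigenvalues $\pm e^{i\theta/2}$, computes $\langle\psi_\pm,\psi_\pm\rangle_\tw = \pm e^{i\theta/2}\langle\psi_\pm,\psi_\pm\rangle$, and then argues that these values can be real (so that ``positive definite or indefinite'' makes sense) only if $\theta\in 2\pi\bbZ$. You instead go straight to the definitional requirement that an indefinite inner product be Hermitian: $\overline{\langle\phi,\psi\rangle_\tw} = \langle\psi,\tw^\dagger\phi\rangle$ while $\langle\psi,\phi\rangle_\tw = \langle\psi,\tw\phi\rangle$, so conjugate-symmetry forces $\tw=\tw^\dagger$ outright. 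This bypasses the phase bookkeeping entirely and does not even need the hypothesis \eqref{EqFormTw} for part (i), though it is consistent with it. (One slip: you wrote ``a Hermitian (hence indefinite) inner product'' where you clearly mean ``an indefinite (hence Hermitian) inner product''; the implication runs the other way, and that is what your argument uses.) Part (ii) of your proof is essentially the paper's argument — spectral decomposition under $\tw^2=\bbbone$, $\pm$-definiteness on $V_\pm$, and ruling out $\tw=\pm\bbbone$ via non-triviality of $\rho$. You make explicit the completeness of $V_\pm$ for the Krein norm (it coincides with the Hilbert norm on each eigenspace, and closed subspaces of a Hilbert space are complete), a point the paper's proof leaves tacit; that is a welcome addition, since completeness is part of the definition of Krein space being invoked.
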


\begin{proof}
	From lemma \ref{PropRegImpFund}, the $\calB(\calH)$-regularity condition implies that $\tw$ satisfies equation \eqref{EqFormTw}. Since the inner product $\langle\, . \, , \, . \, \rangle$ is Hermitian, taking $\psi,\psi^\prime$ in $\calH$ we have that
	\begin{equation}
		\langle \psi , \psi^\prime \rangle_\tw =\langle \psi , \tw \psi^\prime \rangle=\overline{\langle \tw\psi^\prime ,  \psi \rangle}=\overline{\langle \psi^\prime ,  \tw^\dagger\psi \rangle}=\exp(i\theta)\overline{\langle \psi^\prime ,  \psi \rangle}_\tw.
	\end{equation}	
	This implies that the inner product is Hermitian if and only if $\theta=2n\pi$ for some integer $n$. This induces that $\tw=\tw^\dagger$ and that $\langle\, . \, , \, . \, \rangle_\tw$ is also indefinite.
	
	A non-trivial twist, i.e. $\rho(a)\neq a$ for any $a\in\calA$, is always realized by a unitary $\tw\neq \pm \bbbone$. In order to determine when the induced $\tw$-inner product $\langle\, . \, , \, . \, \rangle_\tw$ is indefinite, we compute the eigenvalues of $\tw$. Equation \eqref{EqFormTw} implies that $\tw^2=\exp(i\theta)$. We can then construct the projection operators $P_\pm\defeq (\bbbone\pm \exp(-i\theta/2)\tw)/2$ so that $P_++P_-=\bbbone$ with $P_+P_-=0$ and $\tw P_\pm=(\tw\pm \exp(i\theta/2)\bbbone)/2=\pm\exp(i\theta/2) P_\pm$. Taking a vector $\psi\in V$, we define $\psi_\pm\defeq P_\pm \psi$ so that $\psi=\psi_++\psi_-$ and consequently
	\begin{equation}
		\tw \psi_\pm=\tw P_\pm \psi=\pm\exp(i\theta/2) P_\pm \psi=\pm\exp(i\theta/2) \psi_\pm.
	\end{equation} 
	Then $\psi_\pm$ are eigenvectors of $\tw$ with the eigenvalues $\pm\exp(i\theta/2)$, and we obtain
	\begin{equation}
		\label{EqPhaseRelaInners}
		\langle\psi_\pm , \psi_\pm \rangle_\tw=\pm\exp(i\theta/2)\langle\psi_\pm , \psi_\pm \rangle,
	\end{equation}
	where $\langle\psi_\pm , \psi_\pm \rangle$ is always positive and real.
	
Thus, $\langle\, . \, , \, . \, \rangle_\tw$ is positive definite or indefinite precisely when $\theta=2n\pi$ for some integer $n$, ensuring that $\tw=\tw^\dagger$ and making the inner product Hermitian. In this case, in order to show that the induced $\tw$-inner product $\langle\, . \, , \, . \, \rangle_\tw$ is always indefinite, we prove that $\tw$ always admits $1$ and $-1$ as eigenvalues when the twist is non-trivial. The projection operators become $P_\pm\defeq (\bbbone\pm \tw)/2$, so that $\tw=P_+-P_-$ and $\tw P_\pm=\pm P_\pm$. Taking a vector $\psi\in V$ with $\psi_\pm\defeq P_\pm \psi$, we obtain $\tw \psi_\pm=\tw P_\pm \psi=\pm P_\pm \psi=\pm \psi_\pm$. Since the twist is non-trivial, $\tw\neq \pm \bbbone$, which implies that $P_+\neq0$ and $P_-\neq0$, ensuring the existence of both eigenspaces and the eigenvalues $\pm1$. Since $\langle\, . \, , \, . \, \rangle$ is positive definite, it follows that
	\begin{equation}
		\langle \psi_\pm , \psi_\pm \rangle_\tw = \pm\langle \psi_\pm, \psi_\pm \rangle,
	\end{equation}
	which implies that $\langle\, . \, , \, . \, \rangle_\tw$ is indefinite. Finally, since $\tw^2=\bbbone$, we recover the positive definite inner product via the relation $\langle\, . \, , \, . \, \rangle=\langle\, . \, , \tw \, . \, \rangle_\tw$, ensuring that $\tw=\tw^\dagger$.
\end{proof}
Hermitian (indefinite) inner products then appear as a special case of twisted products associated with non-trivial $\calB(\calH)$-regular twists, where the phase shift in equation \eqref{EqFormTw} vanishes, i.e., $\tw=\tw^\dagger$.
 \begin{remark}
 	The only way to obtain a positive definite product $\langle\, . \, , \, . \, \rangle_\tw$ is when $\tw=\bbbone$ i.e. in the case $\rho$ is trivial on all $\calB(\calH)$. This case correspond to usual spectral triples.
 \end{remark}
 \begin{definition}[Fundamental twist]
 	A fundamental twist is a twist $\rho\in\Inn(\calB(\calH))$, implemented by the unitary $\tw$ so that $\tw=\tw^\dagger$. 
 \end{definition}
 
\newpage
\section{Pseudo-Riemannian spectral triples} 
 \label{SubsecKrein}
 The main purpose of this subsection is to present the connection between the so-called Krein inner product and the $\tw$-inner product, leading to the definition of the pseudo-Riemannian spectral triple and it's twisted version.
\begin{definition}[Krein space]
Lets consider a vector space $V$ together with an indefinite (nondegenerate) inner product $\langle \, .\, , \, .\, \rangle_{\calJ}$ such that $V=V^+\oplus V^-$ with $\langle \, .\, , \, .\, \rangle_{\calJ}$ being positive definite on $V^+$ and negative definite on $V^-$. If  $V^\pm$ are complete for the induced norm, then $\calK\defeq (V, \langle \, .\, , \, .\, \rangle_{\calJ})$ is called Krein space, and $\langle \, .\, , \, .\, \rangle_{\calJ}$ is the corresponding Krein product. 
\end{definition}
\begin{definition}[Fundamental symmetry of a Krein space]
Taking the Krein space $\calK=(V, \langle \, .\, , \, .\, \rangle_{\calJ})$ with orthogonal projections $P_{\pm}$ on $V_{\pm}$ i.e. $P_{\pm}V=V_{\pm}$, the associated fundamental symmetry is the operator $\calJ\defeq P_+-P_-$. This operator satisfy the relations $\calJ^\dagger=\calJ$ and $\calJ^2=\bbbone$ and induces a positive definite inner product $\langle \, .\, , \, .\, \rangle$ by the relation
\begin{equation}
\langle \, .\, , \, .\, \rangle\defeq \langle \, .\, , \,\calJ .\, \rangle_{\calJ}.
\end{equation}
\end{definition}

Conversely, starting from $\calJ$ with $\calJ^\dagger=\calJ$ and $\calJ^2=\bbbone$ so that $\calJ$ have two eigenspaces $V^\pm$ corresponding to eigenvalues $\pm 1$ whose direct sum is $V$, we can define the indefinite inner product $\langle \, .\, , \, .\, \rangle_{\calJ}	\defeq \langle \, .\, , \calJ\, .\, \rangle$ for which $\calJ$ is a fundamental symmetry.

 As a consequence of proposition \ref{PropTwistProd}, any non-trivial fundamental twist $\rho(\, .\,)=\tw(\, .\,)\tw$ induces a Krein product $\langle\, . \, , \, . \, \rangle_\tw$ with fundamental symmetry $\tw$.
 
 \begin{lemma}
If the twist by grading is fundamental, then the unitary $\tw$ is given by 
	\begin{equation}
	\label{EqCompJ}
	\tw =	\begin{pmatrix}
		0 & \sigma \\
		\sigma^\dagger & 0 \\
	\end{pmatrix}
\end{equation}
with $\sigma$ being a unitary matrix so that $\forall a=(a_1,a_2)\in \algA\otimes \mathbb{C}^2$ we have $[\sigma, a_2]=[\sigma^\dagger, a_1]=0$.
  \end{lemma}
 \begin{proof}
 	Taking the result of equation \eqref{EqForTwGrad} and imposing the twist to be fundamental, we get that $\tw^2=\bbbone$ implies $\tw_2\tw_3=\tw_3\tw_2=\bbbone$. Multiplying on the left by $\tw_2^\dagger$ we get that $\tw_3=\tw_2^\dagger$. Changing the notation $\tw_2$ into $\sigma$ we obtain the result of equation \eqref{EqCompJ}. 
 \end{proof}

Following \cite{strohmaier2006noncommutative}, we now define two notions of spectral triples, based on Krein spaces.
\begin{definition}[Pseudo-Riemannian spectral triple]
 A pseudo-Riemannian spectral triple $(\calA,\calK, \tilde{\Dir})$ is a spectral triple for which the Hilbert space is turned into a Krein space $\calK$ and the Dirac operator into a Krein selfadjoint operator $\tilde{\Dir}$ with the requirement that $[\tilde{\Dir} ,a]$ is bounded for any $a\in\calA$. 
\end{definition}

\begin{definition}[Twisted pseudo-Riemannian spectral triple]
A twisted pseudo-Riemannian spectral triple $(\calA,\calK, \tilde{\Dir}, \tw )$ is a pseudo-Riemannian spectral triple for which only $[\tilde{\Dir}, a ]_\rho$ is required to be bounded for any $a\in\calA$. 
\end{definition}

\newpage

\section{Connection for even real spectral triples}
\label{SecDual}
The purpose of this section is to present the four kinds of generalized spectral triples and to show the bijective correspondence between their structures, i.e. their Dirac operators, derivations, one-forms, axioms, and fluctuations. The concept of $\tw$-morphism is presented in subsection \ref{SubSecTwMorph}. It will implement the transition from one generalized spectral triple to its dual one, thanks to the unitary $\tw$.
 
\subsection{Presentation of the connection}
 \label{SubSecPresDual}

In the following, we will only consider non-trivial fundamental twists $\rho\in\Inn(\calB(\calH))$ for $\calA$ an involutive algebra acting as bounded operator on a Hilbert space $\calH$. We then have $\rho(\, .\,)=\tw (\, .\,) \tw$ with $\tw=\tw^\dagger$ and $\tw^2=\bbbone$. The induced indefinite inner product is given by $\langle\, . \, , \, . \, \rangle_\tw := \langle\, . \, , \tw . \, \rangle$ for which $\tw$ is a fundamental symmetry. Starting from a Hilbert space $\calH\defeq (V, \langle \, .\, , \, .\, \rangle)$ where $V$ is a vector space, we call $\calK_\tw\defeq  (V, \langle \, .\, , \, .\, \rangle_\tw)$ the corresponding Krein space. The norm is the one induced by the inner product $\langle \, .\, , \, .\, \rangle$. In the same way, $\Dir$ will design a selfadjoint Dirac operator and $\KDir$ a $\tw$-selfadjoint Dirac operator.

Taking a fundamental symmetry $\tw$, the relation $O=\tw P$ between any two operators $O$ and $P$ in $\calB(\calH)$ defines a bijective correspondence between these operators as the transformation $P\, \to\, O=\tw P$ is a bijection.
 
\begin{proposition}
	\label{DualDir}
	If $\tw$ is a fundamental symmetry, then $\tw$-self-adjoint operators $O^{\, \tw}$ and self-adjoint operators $O$ are in bijective correspondence through the relation
	\begin{equation}
	\label{EqDualD}
	O^{\, \tw}\defeq\tw O.
	\end{equation}
\end{proposition}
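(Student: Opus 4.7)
The plan is to invoke Proposition \ref{PropOnTo} to get the underlying bijection on $\calB(\calH)$, and then verify that this bijection exchanges the two self-adjointness conditions. Concretely, since $\tw=\tw^\dagger$ and $\tw^2=\bbbone$, the $\tw$-adjoint reads $A^{\Tadj}=\tw A^\dagger \tw$ for any $A\in\calB(\calH)$.

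First I would check that the map $O\mapsto \tw O$ indeed lands in the $\tw$-self-adjoint operators when $O=O^\dagger$. Setting $O^{\,\tw}\defeq \tw O$, I compute
\begin{equation}
(O^{\,\tw})^{\Tadj}=\tw (O^{\,\tw})^\dagger \tw=\tw (\tw O)^\dagger \tw=\tw O^\dagger \tw \tw=\tw O=O^{\,\tw},
\end{equation}
using $\tw^2=\bbbone$ and $O^\dagger=O$.

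Conversely, starting from a $\tw$-self-adjoint operator $O^{\,\tw}$, the candidate preimage is $O\defeq \tw O^{\,\tw}$ (this is forced by $\tw^2=\bbbone$). The $\tw$-self-adjointness $(O^{\,\tw})^{\Tadj}=O^{\,\tw}$ means $\tw (O^{\,\tw})^\dagger\tw=O^{\,\tw}$, hence $(O^{\,\tw})^\dagger=\tw O^{\,\tw}\tw$. Then
\begin{equation}
O^\dagger=(\tw O^{\,\tw})^\dagger=(O^{\,\tw})^\dagger \tw=\tw O^{\,\tw}\tw\tw=\tw O^{\,\tw}=O,
\end{equation}
so $O$ is self-adjoint.

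Finally, Proposition \ref{PropOnTo} already guarantees that $O\mapsto \tw O$ is one-to-one on all of $\calB(\calH)$, so its restriction to self-adjoint operators is automatically injective, and the two computations above give surjectivity onto the $\tw$-self-adjoint ones together with a well-defined inverse. No obstacle of substance is expected; the only thing to keep in mind is to use $\tw=\tw^\dagger$ and $\tw^2=\bbbone$ consistently when manipulating adjoints.
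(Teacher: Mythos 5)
Your proof is correct and takes essentially the same route as the paper's: verify that $O\mapsto\tw O$ maps self-adjoints into $\tw$-self-adjoints, check the converse, and appeal to Proposition \ref{PropOnTo} for bijectivity. The only cosmetic difference is that you first simplify the $\tw$-adjoint to the explicit form $A^\Tadj=\tw A^\dagger\tw$ before computing, whereas the paper works directly with $(\tw\,\cdot\,\tw^\dagger)^\dagger$; both amount to the same two-line calculation.
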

\begin{proof}
	If $O$ is self-adjoint then $O^{\, \tw}= \tw O $ is $\tw$-self-adjoint:
	\begin{equation}
		(O^{\, \tw})^\Tadj=(\tw O^{\, \tw} \tw^\dagger)^\dagger=(\tw \tw O \tw^\dagger)^\dagger=\tw O=O^{\, \tw}.
	\end{equation}
Since $\tw$ is a fundamental symmetry, equation \eqref{EqDualD} is equivalent to $O= \tw O^{\, \tw}$. Then, the fact that $O^{\, \tw}$ is $\tw$-self-adjoint implies that $O$ is self-adjoint since $O^\dagger= (\tw O^{\, \tw})^\dagger =(O^{\, \tw})^\dagger \tw^\dagger=\tw O^{\, \tw}=O$. The relation between $O$ and $O^{\, \tw}$ is bijective as $\tw$ is unitary. 
\end{proof}
As a consequence of proposition \ref{DualDir}, $\tw$-self-adjoint Dirac operators $\KDir$ and self-adjoint Dirac operators $\Dir$ are in bijective correspondence through the relation $\KDir=\tw \Dir$. This defines the first element of the connection, for the bijective connection between the Dirac operators of spectral triples and pseudo-Riemannian spectral triples.

We can now define the following four kinds of generalized spectral triples. 
\begin{definition}[generalized spectral triples]
	\label{Def4kindsST}
Taking the involutive algebra $\calA$, the unitary operator $\tw=\tw^\dagger$, the Dirac operators $\KDir$ and $\Dir$ related by $\KDir\defeq \tw\Dir $ with the real operator $J$ and the grading $\Gamma$, we define four specific kind of generalized spectral triple as
\begin{itemize}
	\item $ST$: Spectral Triple $(\calA,\calH, \Dir, J, \Gamma )$,
	\item $\tw$-$PRST$: $\tw$-Pseudo-Riemannian Spectral Triple $(\calA,\calK_\tw, \KDir, J, \Gamma )$,
	\item $\tw$-$TST$: $\tw$-Twisted Spectral Triple $(\calA,\calH, \Dir, J, \Gamma, \tw )$,
	\item $\tw$-$TPRST$: $\tw$-Twisted Pseudo-Riemannian Spectral Triple $(\calA,\calK_\tw, \KDir, J, \Gamma, \tw )$.
\end{itemize}
\end{definition}
where the operator $\tw$ defines the three last kinds of spectral triples, starting from the usual spectral triple $(\calA,\calH, \Dir, J, \Gamma )$. The key point here is the connection of the Krein inner product of the pseudo-Riemannian spectral triples with the twist $\rho$. Note that the choice was done to use twisted (Krein) products only in the pseudo-Riemannian case.

\begin{remark}
We have that the $\tw$-$PRST$, the $\tw$-$TST$ and the $\tw$-$TPRST$ give back the $ST$ if we take $\tw=\bbbone$, presenting them as a deformation (with parameter $\tw$) of the usual $ST$. 
\end{remark}

In the following, we denote by $A$ one-forms in the context of the $ST$, by $A^{\,\tw}$ one-forms in the context of $\tw$-$PRST$, by $A_\rho$ twisted one-forms in the context of $\tw$-$TST$ and by $A_\rho^{\,\tw}$ twisted one-forms in the context of $\tw$-$TPRST$.

We can now determine the form fluctuations of the Dirac operator takes in the context where the Dirac operator is taken to be $\KDir=\tw \Dir$.
\begin{proposition}
	Taking the operator $\KDir$, twisted and usual fluctuations are given by
	\begin{align*}
		&\KDir_{A_\rho^{\,\tw}}=\KDir+ A_\rho^{\,\tw} +\epsilon^{\prime}\epsilon^{\prime\prime\prime} J A_\rho^{\,\tw} J^{-1}\qquad\quad&\text{ for $\tw$-$TPRST$}\\
		&\KDir_{A^{\,\tw}}=\KDir+ A^{\,\tw} +\epsilon^{\prime}\epsilon^{\prime\prime\prime} J A^{\,\tw} J^{-1}\qquad\quad&\text{for $\tw$-$PRST$}
	\end{align*}
\end{proposition}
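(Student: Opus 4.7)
The plan is to reduce each fluctuation of $\TDir$ to a standard (twisted or untwisted) fluctuation of $\Dir$ by means of the relation $\TDir = \tw\Dir$ and the duality of unitary actions established in Proposition \ref{PropEqUnit}, and then transport the usual spectral-triple expansion back through $\tw$ using the one-form duality of Proposition \ref{PropeqOnef}.

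For the $\tw$-$TPRST$ case, I start from a type-1 action $U\TDir U^\Tadj$ with $U = u J u J^{-1}$ unitary. By Proposition \ref{PropEqUnit} we have $U\TDir U^\Tadj = \tw V \Dir V^\dagger$ with $V \defeq \rho(U)$ still unitary (Proposition \ref{PropTechUnit}). The standard fluctuation identity for the $ST$ $(\calA,\calH,\Dir,J,\Gamma)$ then yields
\begin{equation*}
V\Dir V^\dagger = \Dir + B + \epsilon^\prime J B J^{-1},
\end{equation*}
for a one-form $B \in \Omega^1_\Dir(\calA)$. Multiplying on the left by $\tw$ and using $\TDir = \tw\Dir$ gives $\TDir + \tw B + \epsilon^\prime \tw J B J^{-1}$. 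I now use two facts. First, $\tw B = A_\rho^{\,\tw}$ by the one-form duality of Proposition \ref{PropeqOnef} (duality 1), identifying the corresponding twisted one-form in $\tw$-$TPRST$. Second, the relation $\rho(J) = \tw J \tw = \epsilon^{\prime\prime\prime} J$ combined with $\tw^2 = \bbbone$ gives the commutation $\tw J = \epsilon^{\prime\prime\prime} J \tw$, so that $\tw J B J^{-1} = \epsilon^{\prime\prime\prime} J \tw B J^{-1} = \epsilon^{\prime\prime\prime} J A_\rho^{\,\tw} J^{-1}$. Combining these yields the announced expression $\TDir_{A_\rho^{\,\tw}} = \TDir + A_\rho^{\,\tw} + \epsilon^\prime \epsilon^{\prime\prime\prime} J A_\rho^{\,\tw} J^{-1}$.

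For the $\tw$-$PRST$ case I proceed identically but start from the other kind of action, $U\TDir U^\dagger$. Proposition \ref{PropEqUnit} rewrites it as $\tw V \Dir V^\Tadj$ with $V = \rho(U)$; this is precisely a type-1 twisted fluctuation of $\Dir$ in the twisted spectral triple sense, producing $V\Dir V^\Tadj = \Dir + B_\rho + \epsilon^\prime J B_\rho J^{-1}$ for a twisted one-form $B_\rho$. Multiplying by $\tw$ and invoking $\tw B_\rho = A^{\,\tw}$ (Proposition \ref{PropeqOnef}, duality 2) together with $\tw J = \epsilon^{\prime\prime\prime} J\tw$ gives $\TDir_{A^{\,\tw}} = \TDir + A^{\,\tw} + \epsilon^\prime \epsilon^{\prime\prime\prime} J A^{\,\tw} J^{-1}$. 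The $\tw$-unitary variants $U_\tw\TDir U_\tw^\dagger$ and $U_\tw\TDir U_\tw^\Tadj$ are handled by the exact same scheme: Proposition \ref{PropEqUnit} again delivers a $\tw$-unitary $V_\tw = \rho(U_\tw)$ (Proposition \ref{PropTechUnit}), the corresponding action on $\Dir$ is of type 2 for $\tw$-$TPRST$ and type 1 for $\tw$-$PRST$, and the final identification goes through verbatim.

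The main obstacle is not a conceptual one but the careful bookkeeping of signs and adjoints, specifically ensuring that (i) the four different unitary/$\tw$-unitary actions on $\TDir$ land, after stripping off $\tw$ on the left, in exactly the right kind of fluctuation of $\Dir$ so that Proposition \ref{PropeqOnef} can be invoked with the appropriate duality (1 or 2); and (ii) the sign $\epsilon^{\prime\prime\prime}$ coming from $\rho(J) = \epsilon^{\prime\prime\prime} J$ is correctly propagated through the step $\tw J = \epsilon^{\prime\prime\prime} J\tw$, since this is what converts the pair $(\epsilon^\prime)$ of the underlying spectral triple into the pair $(\epsilon^\prime \epsilon^{\prime\prime\prime})$ appearing in the target formula.
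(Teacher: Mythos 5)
Your proposal is correct, but it takes a genuinely different route from the paper's. The paper proves the formula for $\TDir$ directly from first principles: writing $U=u\hat u$ with $\hat u = JuJ^{-1}$, it expands $U\TDir U^\Tadj$ step by step, invoking the zero-order condition, the twisted first-order condition, and the derived axiom $J\TDir=\epsilon^{\prime}\epsilon^{\prime\prime\prime}\TDir J$ to recombine the terms and read off the twisted one-form $u[\TDir,u^\Tadj]_\rho$ (and analogously for the $\tw$-unitary case). You instead use Proposition \ref{PropEqUnit} to rewrite each action on $\TDir$ as $\tw$ times a fluctuation of $\Dir$ of the dual type, plug in the already-established expansion $V\Dir V^\dagger=\Dir+B+\epsilon^{\prime}JBJ^{-1}$ (respectively the twisted version of \eqref{EqTwFluct}), and then multiply through by $\tw$, collecting signs via $\tw J=\epsilon^{\prime\prime\prime}J\tw$ and identifying one-forms via Proposition \ref{PropeqOnef}. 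Your route is shorter and makes the duality machinery do the work, but it treats the expansion of $V\Dir V^\Tadj$ into $\Dir+B_\rho+\epsilon^{\prime}JB_\rho J^{-1}$ as a black box (the paper only cites this from the literature), whereas the paper's direct computation is self-contained in the order conditions it has already laid out. Both derivations land on the same twisted one-form, since $u[\TDir,u^\Tadj]_\rho=\tw\,\rho(u)[\Dir,\rho(u)^\dagger]=\tw B$, so the two are consistent; the trade-off is economy versus self-containedness.
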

\begin{proof}
	Taking the unitary operator $U=uJuJ^{-1}=u\hat{u}=\hat{u}u$, with $u$ and $\hat{u}\defeq JuJ^{-1}$ unitary operators and noting that $J\Dir=\epsilon^\prime \Dir J$ implies $J\KDir= \epsilon^\prime\epsilon^{\prime\prime\prime}\KDir J$, then we have for twisted fluctuations:
	\begin{align*}
		U\KDir U^\Tadj&=u\hat{u} \KDir \hat{u}^\Tadj u^\Tadj =\hat{u}(u\KDir u^\Tadj)\hat{u}^\Tadj = \hat{u}(\KDir+ u [\KDir,u^\Tadj]_\rho)\hat{u}^\Tadj\\
		&= \hat{u}\KDir \hat{u}^\Tadj + \hat{u}u\hat{u}^\dagger [\KDir,u^\Tadj]_\rho \qquad \text{(twisted first-order condition.)}\\
		&= \hat{u}\KDir \hat{u}^\Tadj + \hat{u}\hat{u}^\dagger u [\KDir,u^\Tadj]_\rho \qquad \text{(zero-order condition.)}\\
		&= JuJ^{-1}\KDir J u^\Tadj J^{-1} +  u [\KDir,u^\Tadj]_\rho\\
		&= \epsilon^{\prime}\epsilon^{\prime\prime\prime} JuJ^{-1}J\KDir  u^\Tadj J^{-1} +  u [\KDir,u^\Tadj]_\rho\qquad \text{(via $J\KDir= \epsilon^\prime\epsilon^{\prime\prime\prime}\KDir J$)}\\
		&= \epsilon^{\prime}\epsilon^{\prime\prime\prime}J(\KDir+u[\KDir,u^\Tadj]_\rho)J^{-1}+  u [\KDir,u^\Tadj]_\rho\\
		&=\KDir+ u [\KDir,u^\Tadj]_\rho +\epsilon^{\prime}\epsilon^{\prime\prime\prime} J u[\KDir,u^\Tadj]_\rho J^{-1}
	\end{align*}
	so that twisted fluctuations of $\KDir$ take the form $\KDir_{A_\rho^{\,\tw}}=\KDir+ A_\rho^{\,\tw} +\epsilon^{\prime}\epsilon^{\prime\prime\prime} J A_\rho^{\,\tw} J^{-1}$.\\
	
	In the same way, taking the $\tw$-unitary operator $U_\tw=u_\tw Ju _\tw J^{-1}=u_\tw\hat{u}_\tw=\hat{u}_\tw u_\tw $, with $u_\tw$ and $\hat{u}_\tw\defeq Ju_\tw J^{-1}$ $\tw$-unitary operators, we get the fluctuation term
	\begin{align*}
		U_\tw\KDir U_\tw^\Tadj&=u_\tw\hat{u}_\tw \KDir \hat{u}_\tw^\Tadj u_\tw^\Tadj =\hat{u}_\tw(u_\tw\KDir u_\tw^\Tadj)\hat{u}_\tw^\Tadj = \hat{u}_\tw(\KDir+ u_\tw [\KDir,u_\tw^\Tadj])\hat{u}_\tw^\Tadj\\
		&= \hat{u}_\tw\KDir \hat{u}_\tw^\Tadj + \hat{u}_\tw \hat{u}_\tw^\Tadj u_\tw [\KDir,u_\tw^\Tadj] \qquad \text{(first-order condition.)}\\
		&= Ju_\tw J^{-1}\KDir J u_\tw^\Tadj J^{-1} +  u_\tw [\KDir,u_\tw^\Tadj]\\
		&= \epsilon^{\prime}\epsilon^{\prime\prime\prime} Ju_\tw J^{-1}J\KDir  u_\tw^\Tadj J^{-1} +  u_\tw [\KDir,u_\tw^\Tadj]\qquad \text{(via $J\KDir= \epsilon^\prime\epsilon^{\prime\prime\prime}\KDir J$)}\\
		&= \epsilon^{\prime}\epsilon^{\prime\prime\prime}J(\KDir+u_\tw[\KDir,u_\tw^\Tadj])J^{-1}+  u_\tw [\KDir,u_\tw^\Tadj]\\
		&=\KDir+ u_\tw [\KDir,u_\tw^\Tadj] +\epsilon^{\prime}\epsilon^{\prime\prime\prime} J u_\tw[\KDir,u_\tw^\Tadj] J^{-1}.
	\end{align*}
	so that fluctuations of $\KDir$ take the form $\KDir_{A^{\,\tw}}=\KDir+ A^{\,\tw} +\epsilon^{\prime}\epsilon^{\prime\prime\prime} J A^{\,\tw} J^{-1}$.
\end{proof}
\newpage
We can now present the main result of this paper.
\begin{theorem}[The connection theorem]
	\label{Thmconnection}
	Let consider the four kind of generalized spectral triples presented in definition \ref{Def4kindsST}. There exists a canonical bijective correspondence between the set of twisted spectral triple and the set of untwisted spectral triple structure provided by the following two connections:
	\begin{itemize}
		\item \textbf{connection 1:} a bijective correspondence between the $ST$ and the $\tw$-$TPRST$;
		\item \textbf{connection 2:} a bijective correspondence between the $\tw$-$TST$ and the $\tw$-$PRST$.
	\end{itemize}
\end{theorem}
 
\begin{proof}
By construction, the generalized spectral triples have the same algebra, real structure and grading operators together with inner product structures related by $\tw$. It remains to present the connection for derivations, boundedness of commutators, first-order conditions, one-forms and fluctuations of the Dirac operators as done below.\\
	
\noindent $\bullet$ \textit{\textbf{Step 1 : Connection for derivations.}} There is a connection between twisted derivations and derivations provided by the bijective correspondence given by  $ [\KDir, a]_\rho=\tw [\Dir, a]$ for connection $1$ and $ [\Dir, a]_\rho=\tw [\KDir, a]$ for connection $2$.\\

\noindent\textit{proof:} Since $\Dir= \tw \KDir $ we have that
\begin{align*}
	& [\KDir, a]_\rho  =\tw\Dir a-\tw a\tw \tw \Dir =\tw [\Dir, a] \qquad\qquad &\text{for connection 1}\\
	&  [\Dir, a]_\rho =\Dir a-\tw a \tw \Dir=   \tw [\KDir, a] \qquad\qquad &\text{for connection 2}
\end{align*}
$\forall a\in\calA.$ This relation is bijective as $\tw$ is unitary.\\

\noindent $\bullet$ \textit{\textbf{Step 2 : mutual implication of boundedness.}} The boundedness of the commutator $[\KDir, a]_\rho$ for the $\tw$-$TPRST$ and the one of $[\Dir, a]$ for the $ST$ implies each other (connection 1). The same holds for $ [\Dir, a]_\rho$ in the $\tw$-$TST$ and $[\KDir, a]$ in the $\tw$-$PRST$ (connection 2).\\

\noindent\textit{proof:} Since $ [\KDir, a]_\rho=\tw [\Dir, a]$ with $\tw$ being a unitary (bounded) operator, the boundedness of $[\KDir, a]_\rho$ and the one of $[\Dir, a]$ implies each other for connection 1. In the same way, noting that $ [\Dir, a]_\rho=\tw [\KDir, a]$, the same holds for connection 2.\\

\noindent $\bullet$ \textit{\textbf{Step 3 : mutual implication of first-order conditions.}} The first-order condition (equation \eqref{FirstOrd}) and the twisted first-order condition (equation \eqref{TwFirstorder}) imply each other in each cases.\\

\noindent\textit{proof:} By a direct computation, we have that:
\begin{align*}
	&[[\KDir, a]_\rho, b^\circ]_{\rho^\circ}=\tw [[\Dir, a], b^\circ]\qquad\qquad &\text{for connection 1}\\
	&[[\KDir, a], b^\circ]=\tw [[\Dir, a]_\rho, b^\circ]_{\rho^\circ}\qquad\qquad &\text{for connection 2}
\end{align*}
so that each first-order condition implies the dual one, thanks to the fact that $\tw^2=\bbbone$.\\

\noindent $\bullet$ \textit{\textbf{Step 4 : Connection for one-forms.}} There is a connection between twisted one-forms $A_\rho^{\,\tw}$ and one-forms $A$ and also between one-forms $A^{\,\tw}$ and twisted one-forms $A_\rho$ given by the bijective relation $A_\rho^{\,\tw}=\tw A$ for connection 1 and $A^{\,\tw}=\tw A_\rho$ for connection 2.\\

\noindent\textit{proof:} Noting that any twisted one-form $A_\rho^{\,\tw}$ can be written as $A_\rho^{\,\tw}=\sum_i\rho(a_i)[\KDir, b_i]_\rho$ for $a_i,b_i\in\calA $, we then have  $A_\rho^{\,\tw}=\sum_i\rho(a_i)[\KDir, b_i]_\rho=\tw \sum_i a_i [\Dir, b_i]\defeq \tw A$ with $A=\sum_i a_i [\Dir, b_i]$ a one-form. In the same way for connection 2 we have $A^{\,\tw}=\sum_ia_i[\KDir, b_i]=\tw\sum_i \rho(a_i)[\Dir, b_i]_\rho=\tw A_\rho$. These relations are bijective.\\

\noindent $\bullet$ \textit{\textbf{Step 5 : Connection for the fluctuations of the Dirac operators.}} There is a connection between twisted fluctuations $\KDir_{A_\rho^{\,\tw}}=\KDir+ A_\rho^{\,\tw} +\epsilon^{\prime}\epsilon^{\prime\prime\prime} J A_\rho^{\,\tw} J^{-1}$ and untwisted fluctuation $\Dir_{A}=\Dir+ A +\epsilon^{\prime} J A J^{-1}$ (connection 1), and also the fluctuation $\KDir_{A^{\,\tw}}=\KDir+ A^{\,\tw} +\epsilon^{\prime}\epsilon^{\prime\prime\prime} J A^{\,\tw} J^{-1}$ and the twisted fluctuations $\Dir_{A_\rho}=\Dir+ {A_\rho} +\epsilon^{\prime} J {A_\rho} J^{-1}$ (connection 2) given by the bijective correspondences $\KDir_{A_\rho^{\,\tw}}=\tw \Dir_{A}$ and $\KDir_{A^{\,\tw}}=\tw \Dir_{A_\rho}$.\\

\noindent\textit{proof:} Thanks to proposition \ref{DualDir} and step $4$, it remains to prove the connection for the last terms of each fluctuation:
\begin{align*}
	& 	\epsilon^{\prime}\epsilon^{\prime\prime\prime} J A_\rho^{\,\tw} J^{-1}   = 	\epsilon^{\prime}\epsilon^{\prime\prime\prime} J\tw A J^{-1}  = \tw	\epsilon^{\prime}   J A J^{-1}   \qquad&\text{for connection 1}\\
	& 	\epsilon^{\prime}\epsilon^{\prime\prime\prime} J A^{\,\tw} J^{-1}   = 	\epsilon^{\prime}\epsilon^{\prime\prime\prime} J\tw A_\rho J^{-1}  =\tw	\epsilon^{\prime}   J A_\rho J^{-1}  \qquad&\text{for connection 2}
\end{align*} 
where the explicit form for $A_\rho^{\,\tw}$, $A^{\,\tw}$, $A_\rho$ and $A$ is taken to be the same as in step $4$. These relations are bijective.
\end{proof}
 
We are now interested in the way this connection manifests for the inner fluctuations of the Dirac operators.

Let's start with a useful technical proposition.
\begin{lemma}
	\label{PropTechUnit}
	If an operator $O$ is unitary (resp $\tw$-unitary), then $\rho(O)$ and $OJOJ^{-1}$ are unitary (resp $\tw$-unitary) operators.
\end{lemma}
\begin{proof}
If $O$ is unitary then $\bbbone=\rho(OO^\dagger)=\rho(O)\rho(O)^\dagger$ and $\bbbone=\rho(O^\dagger O)=\rho(O)^\dagger\rho(O)$ hence $\rho(O)$ is unitary. In the same way $(OJOJ^{-1})(OJOJ^{-1})^\dagger=(OJOJ^{-1})JO^\dagger J^{-1}O^\dagger=\bbbone$ and $(OJOJ^{-1})^\dagger(OJOJ^{-1})=\bbbone$ so that $(OJOJ^{-1})$ is also unitary.

If $O$ is $\tw$-unitary then $\bbbone=\rho(OO^\Tadj)=\rho(O)\rho(O)^\Tadj$ and $\bbbone=\rho(O^\Tadj O)=\rho(O)^\Tadj\rho(O)$ hence $\rho(O)$ is $\tw$-unitary. Using that $\,\rho( J)=\epsilon^{\prime\prime\prime}J$ we have that $(OJOJ^{-1})(OJOJ^{-1})^\Tadj=(\epsilon^{\prime\prime\prime})^2(OJOJ^{-1})JO^\Tadj J^{-1}O^\Tadj=\bbbone$ and then $(OJOJ^{-1})^\Tadj(OJOJ^{-1})=\bbbone$ which implies that $(OJOJ^{-1})$ is $\tw$-unitary.
\end{proof}

A preliminary general result concerning fluctuations is provide by the following.
\begin{lemma}[Connection for the fluctuations of the Dirac operators]
\label{PropEqUnit} 	
There is a connection between the fluctuations of the Dirac operators in each dual contexts. For connection 1 it is given by $U\KDir U^\Tadj=\tw V \Dir V^\dagger$ with $U$ and $V=\rho(U)$ two unitary operators in $\calB(\calH)$ or by the relation $U_\tw\KDir U_\tw^\dagger=\tw V_\tw \Dir V_\tw^\Tadj$ with $U_\tw$ and $V_\tw=\rho(U_\tw)$ being two $\tw$-unitary operators in $\calB(\calH)$. For connection 2 it is given by $U\KDir U^\dagger=\tw V  \Dir V^\Tadj$ and $U_\tw \KDir U_\tw^\Tadj=\tw V_\tw \Dir V_\tw^\dagger$. 
\end{lemma}

\begin{proof}
	Using that $O\tw  = \tw\rho(O)$ for any operator $O\in\calB(\calH)$, for connection 1 we have
	\begin{align}
		\label{D11}
		U\KDir U^\Tadj=U\tw\Dir U^\Tadj=\tw \rho(U)\Dir U^\Tadj=\tw V \Dir  V^\dagger
	\end{align}
	where $V\defeq \rho(U)$ is unitary thanks to lemma \ref{PropTechUnit}. For the second kind of action:
	\begin{align}
		\label{D12}
		U_\tw\KDir U_\tw^\dagger=U_\tw\tw\Dir U_\tw^\dagger=\tw \rho(U_\tw)\Dir U_\tw^\dagger=\tw V_\tw \Dir  V_\tw^\Tadj 
	\end{align}
	since lemma \ref{PropTechUnit} implies that $V_\tw\defeq \rho(U_\tw)$ is also a $\tw$-unitary operator. In the same way, for connection 2 we have
	\begin{align}
		\label{D21}
		&	U  \KDir U^\dagger=\tw \rho(U) \Dir U^\dagger=\tw V  \Dir V^\Tadj\\
		\label{D22}
		&	U_\tw \KDir U_\tw^\Tadj=\tw \rho(U_\tw) \Dir U_\tw^\Tadj=\tw V_\tw \Dir V_\tw^\dagger.
	\end{align}
\end{proof}
Now focusing on inner fluctuations, taking an element $a\in\calA$ and using the relation$\rho( J)=\epsilon^{\prime\prime\prime}J$ we have  that $\rho(Ad(a))= Ad(\rho(a))=\rho(a)J\rho(a)J^{-1}=\rho(a)\rho^\circ(JaJ^{-1})$. Taking the unitary operator $U=Ad(u)=uJuJ^{-1}$ with $u$ a unitary operator in $\calA$ or the $\tw$-unitary operator  $U_\tw=Ad(u_\tw)=u_\tw Ju_\tw J^{-1}$ with $u_\tw$ a $\tw$-unitary operator in $\calA$, there are different ways to generate the fluctuations in each of the four contexts:
\begin{enumerate}
	\item By the actions $U\Dir U^\dagger$ or $U_\tw \Dir U_\tw^\Tadj$ to generate $\Dir_{A}$ for $ST$.
	\item By the actions $U\Dir U^\Tadj$ or $U_\tw \Dir U_\tw^\dagger$ to generate $\Dir_{A_\rho}$ for $\tw$-$TST$.
	\item By the actions $U\KDir U^\dagger$ or $U_\tw \KDir U_\tw^\Tadj$ to generate $\KDir_{A^{\,\tw}}$ for $\tw$-$PRST$.
	\item By the actions $U\KDir U^\Tadj$ or $U_\tw \KDir U_\tw^\dagger$ to generate $\KDir_{A_\rho^{\,\tw}}$ for $\tw$-$TPRST$.
\end{enumerate}

\begin{proposition}[Connection at the level of inner fluctuations] There is a connection between the inner actions that generate fluctuations. For connection 1 it is given by $U\KDir U^\Tadj=\tw V \Dir V^\dagger$ with $U=uJuJ^{-1}$ and $V=\rho(U)=\rho(u)J\rho(u)J^{-1}$ another unitary or by the relation $U_\tw\KDir U_\tw^\dagger=\tw V_\tw \Dir V_\tw^\Tadj$ with $U_\tw=u_\tw Ju_\tw J^{-1}$ and $V_\tw=\rho(U_\tw)=\rho(u_\tw)J\rho(u_\tw)J^{-1}$ another $\tw$-unitary operator. For connection 2 it is given by $U\KDir U^\dagger=\tw V  \Dir V^\Tadj$ and $U_\tw \KDir U_\tw^\Tadj=\tw V_\tw \Dir V_\tw^\dagger$. 
 \end{proposition}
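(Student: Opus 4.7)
The plan is to leverage the already-proved Proposition \ref{PropEqUnit}, which establishes the four duality identities $U\TDir U^\Tadj=\tw V\Dir V^\dagger$, $U_\tw\TDir U_\tw^\dagger=\tw V_\tw\Dir V_\tw^\Tadj$, $U\TDir U^\dagger=\tw V\Dir V^\Tadj$ and $U_\tw\TDir U_\tw^\Tadj=\tw V_\tw\Dir V_\tw^\dagger$ for \emph{arbitrary} (resp.\ $\tw$-)unitary $U,U_\tw\in\calB(\calH)$, with $V=\rho(U)$ and $V_\tw=\rho(U_\tw)$. The only new content of the present statement is the specialization to the inner-fluctuation form $U=\mathrm{Ad}(u)=uJuJ^{-1}$ with $u\in\calU(\calA)$ (and similarly for $u_\tw$), together with the identification of $V$ and $V_\tw$ as adjoint actions built from $\rho(u)$ and $\rho(u_\tw)$.

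First I would compute $\rho(U)$ explicitly. Since $\rho$ is a $*$-automorphism of $\calB(\calH)$ and $\rho(J)=\epsilon^{\prime\prime\prime}J$ (so also $\rho(J^{-1})=\epsilon^{\prime\prime\prime}J^{-1}$), one finds
\begin{equation*}
\rho(U)=\rho(u)\,\rho(J)\,\rho(u)\,\rho(J^{-1})=(\epsilon^{\prime\prime\prime})^2\,\rho(u)\,J\,\rho(u)\,J^{-1}=\rho(u)\,J\,\rho(u)\,J^{-1},
\end{equation*}
which is exactly $\mathrm{Ad}(\rho(u))$, and the same calculation with $u_\tw$ in place of $u$ yields $\rho(U_\tw)=\rho(u_\tw)\,J\,\rho(u_\tw)\,J^{-1}$.

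Then I would check that these $V$ and $V_\tw$ retain the appropriate unitarity. This is handled by Proposition \ref{PropTechUnit}: if $u$ is unitary then $\rho(u)$ is unitary, and the adjoint-action combination $\rho(u)J\rho(u)J^{-1}$ is then unitary; the same proposition applied in the $\tw$-unitary variant gives that $V_\tw$ is $\tw$-unitary. Substituting these explicit forms of $V$ and $V_\tw$ back into the four identities of Proposition \ref{PropEqUnit} yields the claimed expressions for both dualities.

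There is no real obstacle here: the content is a direct specialization of Proposition \ref{PropEqUnit} once the automorphism property of $\rho$ together with $\rho(J)=\epsilon^{\prime\prime\prime}J$ is used to push $\rho$ inside the adjoint action. The mildly delicate point is bookkeeping the sign $(\epsilon^{\prime\prime\prime})^2=1$ so that the $J$ and $J^{-1}$ factors survive unchanged; once this is noted, the four cases are proved in parallel by the same short calculation.
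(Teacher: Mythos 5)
Your proposal is correct and matches the paper's approach: the paper likewise reduces the proposition to Proposition \ref{PropEqUnit} and records the computation $\rho(\mathrm{Ad}(u))=\mathrm{Ad}(\rho(u))=\rho(u)J\rho(u)J^{-1}$ (using $\rho(J)=\epsilon^{\prime\prime\prime}J$) in the sentence immediately preceding the proposition. The only cosmetic difference is that you fold that computation and the unitarity check via Proposition \ref{PropTechUnit} into the proof body, whereas the paper leaves them in the surrounding text.
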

 \begin{proof}

The proof that $U\KDir U^\Tadj=\tw V \Dir  V^\dagger$ has been given in lemma \ref{PropEqUnit} we then have
\begin{align}
	\label{DI11}
\KDir_{A_\rho^{\,\tw}}\defeq U\KDir U^\Tadj=\tw V \Dir  V^\dagger\defeq \tw \Dir_A.
\end{align}
In the same way, in the other cases we have:
\begin{align}
	\label{DI12}
	&\KDir_{A_\rho^{\,\tw}}\defeq U_\tw\KDir U_\tw^\dagger=\tw V_\tw \Dir  V_\tw^\Tadj\defeq \tw \Dir_A\\
	\label{DI21}
	&\KDir_{A^{\,\tw}}\defeq	U  \KDir U^\dagger=\tw V  \Dir V^\Tadj\defeq\tw \Dir_{A_\rho}\\
	\label{DI22}
	&\KDir_{A^{\,\tw}}\defeq	U_\tw \KDir U_\tw^\Tadj=\tw V_\tw \Dir V_\tw^\dagger\defeq\tw \Dir_{A_\rho}.
\end{align}
 \end{proof}
Notice that the connection can be equivalently written as an equality of the evaluation of the fluctuated Dirac operators by their respective inner products:
\begin{align*}
\langle \psi ,\KDir_{A_\rho^{\,\tw}} \psi\rangle_\tw = \langle \psi ,  \Dir_{A}  \psi\rangle\qquad\qquad\qquad&\text{for connection 1}\\
\langle \psi ,\KDir_{A^{\,\tw}} \psi\rangle_\tw = \langle \psi , \Dir_{A_\rho} \psi\rangle\qquad\qquad\qquad&\text{for connection 2}
\end{align*}
highlighting the fact that the heart of the connection lies in the fact that this is the relation between the algebraic structures (inner products, Dirac operators, derivations..) that is conserved, going from one generalized spectral triple to its dual one. 

 \begin{proposition}[Axioms for the dual $\tw$-$PRST$ and $\tw$-$TPRST$]
 	\label{PropAxDual}
Starting from the usual algebraic relations defining $ST$ and $\tw$-$TST$, i.e. $J^2 = \epsilon$, $J \Dir = \epsilon^{\prime} \Dir J$, $J \Gamma = \epsilon^{\prime\prime} \Gamma J$, $\Dir \Gamma+\Gamma\Dir=0$, $\Gamma \pi(a) = \pi(a)\Gamma$ for any $a\in\calA$ and $J\tw=\epsilon^{\prime\prime\prime}\tw J$, then the corresponding algebraic constraints for the duals $\tw$-$PRST$ and $\tw$-$TPRST$ are given by
\begin{equation}
J\KDir= \epsilon_\tw^\prime\KDir J \qquad\text{and}\qquad \KDir\Gamma+\rho(\Gamma)\KDir=0
\end{equation}
with $\epsilon_\tw^\prime=\epsilon^\prime\epsilon^{\prime\prime\prime}$, the other relations being unchanged.
 \end{proposition}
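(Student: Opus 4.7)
The plan is to obtain the two modified relations by direct substitution of $\TDir = \tw\Dir$ into the assumed structure equations for $\Dir$, $J$, $\Gamma$, and $\tw$. The other axioms ($J^2=\epsilon$, $J\Gamma=\epsilon''\Gamma J$, $\Gamma\pi(a)=\pi(a)\Gamma$) involve neither $\Dir$ nor $\TDir$, so they transfer verbatim to the dual contexts; only the two relations that mention the Dirac operator need to be recomputed.

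First I would verify the commutation relation between $J$ and $\TDir$. Using the definition and pushing $J$ through $\tw$ and then through $\Dir$,
\begin{equation*}
J\TDir \;=\; J\tw\Dir \;=\; \epsilon^{\prime\prime\prime}\,\tw J\Dir \;=\; \epsilon^{\prime}\epsilon^{\prime\prime\prime}\,\tw\Dir J \;=\; \epsilon^{\prime}\epsilon^{\prime\prime\prime}\,\TDir J,
\end{equation*}
which is precisely $J\TDir=\epsilon_\tw^\prime \TDir J$ with $\epsilon_\tw^\prime=\epsilon^\prime\epsilon^{\prime\prime\prime}$.

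Next I would handle the anticommutation with the grading. The key observation is that $\tw^2=\bbbone$ lets us rewrite $\Dir=\tw\TDir$, so that the original identity $\Dir\Gamma+\Gamma\Dir=0$ can be pushed through the twist:
\begin{equation*}
\TDir\Gamma \;=\; \tw\Dir\Gamma \;=\; -\,\tw\Gamma\Dir \;=\; -\,\tw\Gamma\tw\,\TDir \;=\; -\,\rho(\Gamma)\TDir,
\end{equation*}
giving $\TDir\Gamma+\rho(\Gamma)\TDir=0$ as claimed (here we used that $\tw=\tw^\dagger$ so that $\rho(\Gamma)=\tw\Gamma\tw^\dagger=\tw\Gamma\tw$).

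There is no real obstacle: everything follows from the one-to-one correspondence $\TDir=\tw\Dir$ together with $\tw^2=\bbbone$ and the three relations $J\tw=\epsilon^{\prime\prime\prime}\tw J$, $J\Dir=\epsilon^\prime\Dir J$, $\Dir\Gamma=-\Gamma\Dir$. The only subtle point is to notice that the anticommutator with $\Gamma$ must be replaced by a mixed (twisted) anticommutator $\TDir\Gamma+\rho(\Gamma)\TDir=0$ rather than a plain anticommutator — this is forced by the fact that $\tw$ does not in general commute with $\Gamma$, and it is exactly the reason why the grading axiom picks up the twist $\rho$ in the dual triple, while the $J$-axiom only picks up the sign $\epsilon^{\prime\prime\prime}$.
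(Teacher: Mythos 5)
Your proof is correct and follows essentially the same route as the paper's: substitute $\TDir=\tw\Dir$, push $J$ through $\tw$ then $\Dir$ for the first relation, and use $\tw\Gamma\tw=\rho(\Gamma)$ (with $\tw^2=\bbbone$) after anticommuting $\Dir$ past $\Gamma$ for the second. The paper factors the intermediate step as $-\tw\Gamma\Dir=-\rho(\Gamma)\tw\Dir$ rather than rewriting $\Dir=\tw\TDir$ as you do, but the manipulation is identical.
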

\begin{proof}

The relation  $J\Dir=\epsilon^\prime \Dir J$ for usual spectral triple implies $J\KDir=J\tw \Dir=\epsilon^{\prime\prime\prime}\tw J \Dir= \epsilon^\prime\epsilon^{\prime\prime\prime}\KDir J$ and we have that $\KDir\Gamma=\tw \Dir \Gamma=-\tw \Gamma \Dir =-\rho(\Gamma)\tw \Dir =-\rho(\Gamma)\KDir$.
\end{proof}

\begin{proposition}
In the case of the twist by the grading, where the algebra is taken to be $\calA\otimes \mathbb{C}^2$, we have that $\Gamma\KDir=\KDir\Gamma$.
\end{proposition}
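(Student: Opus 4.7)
The plan is to leverage the explicit matrix form of $\tw$ for the twist by the grading (proposition giving equation \eqref{EqCompJ}) together with the already-proved dual anticommutation relation for $\TDir$ in proposition \ref{PropAxDual}.

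First I would recall from proposition \ref{PropAxDual} that $\TDir \Gamma + \rho(\Gamma)\TDir = 0$ in any $\tw$-$TPRST$ (or $\tw$-$PRST$). Thus to establish $\Gamma\TDir = \TDir\Gamma$, it suffices to prove that $\rho(\Gamma) = -\Gamma$ in the specific case of the twist by the grading.

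Next I would compute $\rho(\Gamma) = \tw\Gamma\tw$ using the chiral decomposition. In the chiral representation, $\Gamma = \mathrm{diag}(\bbbone,-\bbbone)$, while equation \eqref{EqCompJ} gives
\begin{equation*}
\tw = \begin{pmatrix} 0 & \sigma \\ \sigma^\dagger & 0 \end{pmatrix}.
\end{equation*}
A direct block-matrix multiplication shows $\Gamma\tw = -\tw\Gamma$, hence $\tw\Gamma\tw = -\Gamma\tw^2 = -\Gamma$, so $\rho(\Gamma) = -\Gamma$. Substituting into the relation from proposition \ref{PropAxDual} yields $\TDir\Gamma - \Gamma\TDir = 0$, which is the claim.

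There is no real obstacle here; the content is just that the off-diagonal structure of $\tw$ in the chiral decomposition forces it to anticommute with $\Gamma$, which flips the sign of the grading under the twist and therefore cancels the sign coming from the original anticommutation $\Dir\Gamma + \Gamma\Dir = 0$. Equivalently one can bypass proposition \ref{PropAxDual} and compute directly: $\Gamma\TDir = \Gamma\tw\Dir = -\tw\Gamma\Dir = \tw\Dir\Gamma = \TDir\Gamma$.
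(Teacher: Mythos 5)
Your proof is correct and takes essentially the same route as the paper: both reduce the claim to showing $\rho(\Gamma)=-\Gamma$ and then substitute into the relation $\TDir\Gamma+\rho(\Gamma)\TDir=0$ from Proposition \ref{PropAxDual}. The only minor difference is that the paper obtains $\rho(\Gamma)=-\Gamma$ by observing that $\Gamma\equiv(\bbbone,-\bbbone)$ is itself an element of $\calA\otimes\mathbb{C}^2$, so the swap $\rho(a_1,a_2)=(a_2,a_1)$ applies directly, whereas you carry out the equivalent block-matrix computation $\tw\Gamma\tw$ using the off-diagonal form of $\tw$ from equation \eqref{EqCompJ}.
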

\begin{proof}
Since $\Gamma$ is an element of $\calA\otimes \mathbb{C}^2$ that can be written as $\Gamma=p_+-p_-\equiv (\bbbone, -\bbbone)$ we have that $\rho(\Gamma)=-\Gamma$, hence the result.
\end{proof}
\begin{remark}
	\label{RQExchange}
	Interestingly, the relation $\Gamma \Dir + \Dir \Gamma = 0$ implies that $\Dir$ exchange the $\calH_\pm$ spaces, i.e. $\Dir\, :\, \calH_\pm\, \to \, \calH_\mp$, and the corresponding relation $\Gamma\KDir-\KDir\Gamma=0$ for the twist by the grading implies that $\KDir$ preserve these to spaces $\KDir\, :\, \calH_\pm\, \to \, \calH_\pm$.
\end{remark}

\begin{lemma}
Any $\tw$-$TPRST$ can be associated with a dual $ST$ and vice versa. In the same way, any $\tw$-$PRST$ can be associated with a dual $\tw$-$TST$ and vice versa. The connections between Dirac operators and inner products are given by equations \eqref{EqDualD} and \eqref{Eqtwprod}.  
\end{lemma}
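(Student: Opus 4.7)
The plan is to treat this lemma as the packaging of all the preceding propositions of subsection \ref{SubSecPresDual}, since each of those propositions already establishes one component of the correspondence as a one-to-one relation. My approach is to verify, for each duality, that every defining piece of one spectral triple (Dirac operator, inner product, boundedness requirement, first-order condition, axioms for $J$ and $\Gamma$) has a well-defined partner in the dual structure, and that these partnerships are mutually inverse.

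For duality 1, I would start with a $\tw$-$TPRST$ $(\calA,\calK_\tw,\TDir,J,\Gamma,\rho)$ and set $\Dir\defeq \tw\TDir$. Proposition \ref{DualDir} then gives self-adjointness of $\Dir$ and the fact that the assignment $\TDir\leftrightarrow\Dir$ is bijective. Proposition \ref{PropeqDer}, combined with the subsequent boundedness-transfer proposition, ensures $[\Dir,a]$ is bounded for every $a\in\calA$ (and conversely the boundedness of $[\TDir,a]_\rho$). The first-order condition proposition guarantees that the twisted first-order condition on the $\tw$-$TPRST$ side is equivalent to the usual first-order condition on the $ST$ side. Finally, Proposition \ref{PropAxDual} provides the translation of the $J$- and $\Gamma$-axioms: the signs $(\epsilon,\epsilon'',\epsilon''')$ are preserved and only $\epsilon_\tw'=\epsilon'\epsilon'''$ replaces $\epsilon'$, while $\Dir\Gamma+\Gamma\Dir=0$ on the $ST$ side corresponds to $\TDir\Gamma+\rho(\Gamma)\TDir=0$ on the $\tw$-$TPRST$ side. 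The inverse direction follows by starting from a $ST$ and setting $\TDir\defeq \tw\Dir$, invoking the same propositions, each of which is stated as a one-to-one correspondence; together with equation \eqref{Eqtwprod} this determines the indefinite inner product on $\calK_\tw$.

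For duality 2, the same sequence of invocations applies verbatim, only with the roles of $\rho$-twisting shifted: starting from a $\tw$-$PRST$ $(\calA,\calK_\tw,\TDir,J,\Gamma)$, the operator $\Dir\defeq\tw\TDir$ together with the twist $\rho(\,.\,)=\tw(\,.\,)\tw$ yields a $\tw$-$TST$ $(\calA,\calH,\Dir,J,\Gamma,\rho)$; the derivation duality $[\Dir,a]_\rho=\tw[\TDir,a]$ of Proposition \ref{PropeqDer} now transfers boundedness of the untwisted commutator on the $\tw$-$PRST$ side to boundedness of the twisted one on the $\tw$-$TST$ side, and the reverse direction again proceeds by the one-to-one nature of each proposition.

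I do not expect any genuine obstacle here; the only point requiring a little care is to confirm that nothing beyond what the propositions already cover is needed. In particular one should check that the resolvent-type conditions are compatible: on the pseudo-Riemannian side the definition imposes only boundedness of the commutator, so no compact-resolvent axiom has to be transported; on the $ST$/$\tw$-$TST$ side the resolvent condition is a condition on $\Dir$ alone and is preserved under the identification $\Dir=\tw\TDir$ since the underlying vector space and operator $\Dir$ are unchanged. The final sentence of the lemma, identifying the Dirac operators via \eqref{EqDualD} and the inner products via \eqref{Eqtwprod}, is then simply the content of Proposition \ref{DualDir} together with the definition of $\calK_\tw$.
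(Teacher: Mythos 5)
Your proposal is correct and takes essentially the same approach as the paper: the lemma is stated without a separate proof environment precisely because it is intended as a packaging of the preceding propositions of the subsection (Propositions \ref{DualDir}, \ref{PropeqDer}, \ref{PropeqOnef}, and \ref{PropAxDual} together with the boundedness-transfer and first-order propositions), which is exactly what you assemble. Your explicit walk-through of how each axiom transports under $\Dir\leftrightarrow\tw\Dir$ (and via \eqref{Eqtwprod} for the inner product), plus the observation that the compact-resolvent condition needs no separate argument because the pseudo-Riemannian structures are defined by transforming a given $ST$ so that $\Dir$ is unchanged, makes explicit the reasoning the paper leaves implicit.
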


An important consequence for physics is provided by the following corollary.
\begin{corollary}[Connection for the fermionic and spectral actions]
	\label{PropDualFermAct}
The usual fermionic action evaluated on $\Dir_A$ and the $\tw$-twisted fermionic action evaluated on $\KDir_{A_\rho}=\tw \Dir_{A}$ are equal. The corresponding spectral actions are also equal.
\end{corollary}
 \begin{proof}
A direct computation gives
\begin{align}
&\act_f^\tw( \KDir_{A_\rho^{\,\tw}}, \psi)=\langle \psi , \KDir_{A_\rho^{\,\tw}} \psi \rangle_\tw=  \langle \psi , \tw \tw \Dir_A \psi \rangle=  \act_f( \Dir_A, \psi)\\
&\act[\KDir_{A_\rho^{\,\tw}}]= \Tr f(\KDir_{A_\rho^{\,\tw}} (\KDir_{A_\rho^{\,\tw}})^\dagger / \Lambda^2)= \Tr f(\tw\Dir_{A} \Dir_{A}^\dagger \tw^\dagger  / \Lambda^2)=\act[\Dir_{A}]
\end{align}
 \end{proof}

This imply that the usual spectral invariant are conserved when going from one generalized spectral triple to the dual one. This is an important aspect of the connection.   

The axiom $\Gamma \Dir + \Dir \Gamma = 0$ for spectral triple will be generalized to $\Gamma \Dir \pm \Dir \Gamma = 0$ in subsection \ref{SubSecDiffST} so that the corresponding relation for $\KDir$ will be given by $\Gamma\KDir\mp\KDir\Gamma=0$.
 
\subsection{$\tw$-morphism of generalized spectral triples}
\label{SubSecTwMorph}

Let $\calI_i\defeq (V, ( \, .\, , \, .\, )_i)$ for $i=1,2$ be general (indefinite or definite) inner product spaces with $V$ a vector space as previously. The corresponding adjoint will be denoted by $\dagger_i$. We can define two (unspecified) Generalized Spectral Triples $GST_i\defeq(\calA_i, \calI_i, \calD_i, \tw_i)$ where $\calA_i$'s are involutive algebras acting as bounded operator algebras on $\calI_i$. Taking $a\in\calA_i$, the bounded commutator is given by $[\Dir_i, a]_{\rho_i}$ where $\rho_i(\,.\,)\defeq \tw_i(\,.\,)\tw_i$ is a fundamental twist for $i=1,2$. Such generalized spectral triples contain any of the four kinds of generalized spectral triples defined in subsection \ref{SubSecPresDual} as a particular case. The specific notation $(\calA_i, \calI_i, \calD_i)$ corresponds to the case where $\rho_i$ is a trivial twist (on $\calA_i$). We can now define a particular morphism that will permit to connect different generalized spectral triples by means of this transformation.     
 \begin{definition}[$\tw$-morphism of generalized spectral triple]
Given a fundamental twist $\rho$ implemented by the fundamental symmetry $\tw$ i.e. $\rho(\, .\,)=\tw (\, .\,) \tw$, a $\tw$-morphism of generalized spectral triples is the map $\twphi\, :\, GST_1\, \to\, GST_2$ defined by
\begin{align*}
&\calA_1\;\to\, \calA_2\defeq \rho(\calA_1)\qquad\quad &\calI_1\,\to\, \calI_2\defeq (V, ( \, .\, , \, .\, )_2\defeq ( \, .\, , \tw\, .\, )_1)\\
&\calD_1\, \to\, \calD_2\defeq \tw \Dir_1\qquad\quad &[\Dir_1, a]_{\rho_1}\,\to\, [\Dir_2, a]_{\rho_2}\defeq \tw[\Dir_1, a]_{\rho_1}
\end{align*}
with $\rho_2=\rho\,\circ \rho_1$ or equivalently $\rho_2(\,.\,)\defeq \tw\tw_1(\,.\,)\tw_1\tw$.
 \end{definition}
As $\tw$ is a fundamental symmetry, the fact that $(\tw)^2=\bbbone$ implies that the $\tw$-morphism is involutory, i.e. $\twphi\circ \twphi=\phi^{\, (\tw)^2}=\phi^{\, \bbbone}$ which is the identity morphism. We then have that the inverse of $\twphi$ always exists and is given by $\twphi$ itself, inducing that $\twphi$ is a morphism.
\begin{remark}
	\label{RqTwRelatDef}
The requirement that $\rho$ is a fundamental twist is necessary in order that any (indefinite or not) inner product will be transformed by the $\tw$-morphism in such another inner product (without the complex phase relating the two inner products in equation \eqref{EqPhaseRelaInners}), thanks to a similar demonstration as the one of proposition \ref{PropTwistProd}.
\end{remark}
 
\begin{lemma}
If $\calD_1$ is selfadjoint for $( \, .\, , \, .\, )_1$ then the additional requirement that $\tw=\tw^{\dagger_1}$ implies that the resulting Dirac operator $\calD_2=\tw\calD_1 $ is selfadjoint for $( \, .\, , \, .\, )_2=( \, .\, , \tw\, .\, )_1$.
\end{lemma}
\begin{proof}
Taking $\psi\in V$ we have that $(\psi,\calD_2 \psi)_2=(\psi,\calD_1 \psi)_1=(\calD_1\psi, \psi)_1=( (\tw)^2\calD_1\psi, \psi)_1=( \tw \calD_1\psi, \tw^{\dagger_1}\psi)_1=( \tw \calD_1\psi, \tw\psi)_1=( \calD_2\psi,  \psi)_2$.
\end{proof}
This proposition generalises proposition \ref{DualDir} where the inner product $\langle \, .\, , \, .\, \rangle$ is replaced here by $( \, .\, , \, .\, )_1$ and the requirement $\tw=\tw^\dagger$ by $\tw=\tw^{\dagger_1}$.
\begin{lemma}
The composition of two $\tw$-morphisms $\phi^{\,\tw_1}$ and $\phi^{\,\tw_2}$ is equal to $\phi^{\,\tw_1}\circ\phi^{\,\tw_2}=\phi^{\,\tw_1\tw_2}$ which is also a $\tw$-morphism only if $[\tw_1,\tw_2]=0$. 
\end{lemma}
\begin{proof}
The fact that $\phi^{\,\tw_1}\circ\phi^{\,\tw_2}=\phi^{\,\tw_1\tw_2}$ can directly be seen from the definition of a $\tw$-morphism. We have that $\tw_1\tw_2$ is unitary since $\tw_1$ and $\tw_2$ are also unitaries. We must now check to which condition $\rho_{1,2}(\, .\,)\defeq \tw_1\tw_2 (\, .\,)\tw_2 \tw_1$ is regular. Thanks to lemma \ref{PropRegImpFund}, we know that $\rho_{1,2}$ is a fundamental twist only if $\tw_1\tw_2=\tw_2\tw_1$, hence the result.
\end{proof}
The interest behind the concept of the $\tw$-morphism lies in the fact that they transform a given generalized triple (as introduced in subsection \ref{SubSecPresDual}) into its dual one. For connection 1 we have that
\begin{itemize}
	\item $\twphi\, :\, ST\, \to\, \tw\text{-}TPRST$:  $\twphi(\calA,\calH, \Dir, J, \Gamma )=(\calA,\calK_\tw, \KDir, J, \Gamma, \tw )$.
		\item $\twphi\, :\, \tw\text{-}TPRST\, \to\, ST$  : $\twphi(\calA,\calK_\tw, \KDir, J, \Gamma, \tw )=(\calA,\calH, \Dir, J, \Gamma )$.
		\end{itemize}
And for connection 2:
			\begin{itemize}
	\item $\twphi\, :\, \tw\text{-}PRST\, \to\, \tw\text{-}TST$ :  $\twphi(\calA,\calK_\tw, \KDir, J, \Gamma )=(\calA,\calH, \Dir, J, \Gamma, \tw )$.
	\item $\twphi\, :\, \tw\text{-}TST\, \to\, \tw\text{-}PRST$ : $\twphi(\calA,\calH, \Dir, J, \Gamma, \tw )=(\calA,\calK_\tw, \KDir, J, \Gamma )$.
	\end{itemize}
 Where we use that $\calA\simeq \rho(\calA)$ as $\rho$ is an automorphism. These transformations are bijective since $\twphi$ is a morphism. Corollary \ref{PropDualFermAct} imply that the $\tw$-morphism is not merely an algebraic morphism but an invariance of the physical actions.

\newpage

\section{Connection for even-dimensional manifolds}
\label{SecMFLD}
 
We are now interested in the way the connection of the previous section manifests itself in the case of the spectral triple of even-dimensional smooth compact manifolds, taking $\calA\defeq C^\infty(\Man)\otimes \mathbb{C}^2$\footnote{Taking $\calA\defeq C^\infty(\Man)\otimes \mathbb{C}^2$ is nescessary to present the results in full generality, but we will see that the restriction to $C^\infty(\Man)$ correspond to the case of interest}. As shown in \cite{TwistLandiMarti2016}, the twist by the grading is the only way to minimally twist this spectral triple, making this model the canonical way to implement the twisting procedure in this case. We will therefore consider the model of the minimal twist of the manifold in this section, as done in \cite{TwistLandiMarti2016, martinetti2024torsion}.

Following \cite{strohmaier2006noncommutative}, we will keep the standing assumption that $\Man$ is a smooth compact manifold in this section. In particular, when the pseudo-Riemannian metric $\gpr$ has Lorentzian signature, $(\Man,\gpr)$ should not be interpreted as a realistic globally hyperbolic spacetime, since compact Lorentzian manifolds are known to contain closed timelike curves and hence to be acausal (see for instance \cite{galloway1984splitting}). The constructions below are therefore meant as local, algebraic models at the level of Clifford and Dirac data, illustrating how the twist implements a change of signature, rather than as a full treatment of Lorentzian 
space-times with their global causal structure.

In subsection \ref{SubsecWickPar}, we present the two contexts that will be used to present the connection. The first context is based on a Riemannian metric with associated Clifford algebra (resp pseudo-Riemannian metric with associated Clifford algebra for the second context). We then show how special kinds of fundamental symmetries permit the construction of a generalized parity operator, permitting the signature change. In subsection \ref{SubSecDiffST} we show in which cases these fundamental symmetries can be connected with the twist by the grading, and what are the potential generalized spectral triples that we can build from this connection, in order to set up the connection. In subsection \ref{SubSectDualMet} we show how the corresponding $\tw$-morphism implements the particular signature change described in subsection \ref{SubsecWickPar}, and we introduce the notion of twisted Clifford algebra. The particular 4-dimensional compact pseudo-Riemannian model with Lorentzian signature $(1,3)$ is presented in Subsection~\ref{SubSec4dlor}.

 \subsection{Signature change through the parity operator}
\label{SubsecWickPar}
 
We consider a $2m$-dimensional smooth compact manifold $\Man$ associated with a non degenerate metric $\g$ of signature $(n, 2m-n)$. A local oriented orthonormal basis of $T\Man$ is given by $\{E_1, \dots, E_{2m}\}$ so that the metric can be written accordingly as $g_{ab}=\g(E_a, E_b)$. The dual basis is given by $\{\theta^1, \dots, \theta^{2m}\}$ with the defining relation $\g(\theta^b, E_a)=\delta^b_a$. The Clifford representation associated with $\g$ is generated by the elements $\cl(v)$ for $v\in T_x\Man$ so that the corresponding gamma matrices are given by the unitary operators $\gamma^a\defeq  \cl(\theta^a)$ for any $a\in\{1,\dots, 2m\}$, with the defining relations $\{\gamma^a, \gamma^b\}=2\g^{ab}\bbbone_{2^{m}}$. The corresponding grading is defined by $\Gamma=i^{-m(2m-1)-n} \prod_{a=1}^{2m} \gamma^a$ and satisfy $\Gamma^2=\bbbone$ and $\Gamma^\dagger=\Gamma$. 

The metric $\g$ induces the splitting $T\Man=T\Man_\g^+\oplus T\Man_\g^-$, with the local oriented orthonormal basis given by $\{E_1, \dots, E_{n}\}$ for $T\Man^+$ and $\{E_{n+1}, \dots, E_{2m}\}$ for $T\Man^-$ so that any elements $v\in T_x\Man$ take the form $v=v^+\oplus v^-$ with $\g(v^+,v^+)\geq 0$ and $\g(v^-,v^-)\leq 0$. The signature of $\g$ is then given by $n$ plus signs and $2m-n$ minus signs according to this basis.

Now we can consider the algebra $\calA\defeq C^\infty(\Man)\otimes \mathbb{C}^2$ and its representation according to $\Gamma$ as done in equation \eqref{EqRhogen}. In this way, using the projections $p_\pm:=  (\bbbone \pm \Gamma)/2$ so that $\calH=p_+\calH\oplus p_-\calH:=\calH_+\oplus \calH_-$, all elements $a=(f,g)\in C^\infty(\Man)\otimes \mathbb{C}^2$ will be represented accordingly by the chiral representation $\pi_{\,\Gamma\,}(f,g)$ induced by the grading so that
\begin{equation}
	\label{EqRho}
	\pi_{\,\Gamma\,}(f,g)\defeq p_+\pi_0(f)+p_-\pi_0(g)=	\begin{pmatrix}
		\pi_+(f) & 0  \\
		0 & \pi_-(g)  \\
	\end{pmatrix}
\end{equation} 
with $\pi_0$ the representation of $C^\infty(\Man)$ and $\pi_\pm$ its restrictions to $\calH_\pm$, as defined in \cite{TwistLandiMarti2016}.

\begin{proposition}
For any $a\in\{1, \dots, 2m\}$, the gamma matrices $\gamma^a$ take the following form:
\begin{equation}
	\label{EqGammaForm}
	\gamma^a =
	\left( \begin{array}{cc}
		0 & \sigma^a \\ \tilde\sigma^a & 0
	\end{array} \right)
\end{equation}
in the chiral representation $	\pi_{\,\Gamma\,}$, with $\sigma^a \tilde\sigma^b +\sigma^b\tilde\sigma^a=2 \g^{ab}\bbbone_{2^{m-1}}$.
\end{proposition}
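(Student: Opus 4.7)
The plan is to derive the off-diagonal block form from the anticommutation of $\Gamma$ with $\gamma^a$, and then read off the algebraic relation on the blocks directly from the Clifford relation $\{\gamma^a,\gamma^b\}=2g^{ab}\bbbone_{2^m}$.

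First I would verify that $\Gamma$ anticommutes with every $\gamma^a$. Since $\Gamma$ is (up to a normalizing phase) the product of all $2m$ gamma matrices, moving a single $\gamma^a$ past $\Gamma$ requires $2m-1$ transpositions, each contributing a sign via the Clifford relation on distinct generators; this is an odd number, so $\Gamma\gamma^a+\gamma^a\Gamma=0$ for every $a\in\{1,\dots,2m\}$. (The phase prefactor $i^{-m(2m-1)-n}$ is scalar and does not affect this.)

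Next I would pass to the chiral representation. In the decomposition $\calH=\calH_+\oplus\calH_-$ induced by the projections $p_\pm=(\bbbone\pm\Gamma)/2$, the grading becomes $\Gamma=\operatorname{diag}(\bbbone_{2^{m-1}},-\bbbone_{2^{m-1}})$. Writing a general block decomposition
\begin{equation*}
\gamma^a=\begin{pmatrix}A^a & \sigma^a\\ \tilde\sigma^a & D^a\end{pmatrix},
\end{equation*}
the condition $\Gamma\gamma^a+\gamma^a\Gamma=0$ forces $A^a=0$ and $D^a=0$, which yields exactly the form \eqref{EqGammaForm}.

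Finally, I would plug this block form into the Clifford relation. A direct computation of $\gamma^a\gamma^b+\gamma^b\gamma^a$ gives a block-diagonal matrix
\begin{equation*}
\begin{pmatrix}\sigma^a\tilde\sigma^b+\sigma^b\tilde\sigma^a & 0\\ 0 & \tilde\sigma^a\sigma^b+\tilde\sigma^b\sigma^a\end{pmatrix}=2g^{ab}\bbbone_{2^m},
\end{equation*}
and reading off the upper-left block gives $\sigma^a\tilde\sigma^b+\sigma^b\tilde\sigma^a=2g^{ab}\bbbone_{2^{m-1}}$ as required. There is no genuine obstacle here; the only subtlety is checking that the grading indeed anticommutes with every generator, which is immediate from the parity of $2m-1$. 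The matching lower-left relation on $\tilde\sigma^a\sigma^b+\tilde\sigma^b\sigma^a$ is redundant information for the statement but serves as a consistency check.
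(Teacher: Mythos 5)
Your proof is correct and takes essentially the same approach as the paper: anticommutation of $\Gamma$ with each $\gamma^a$ (a consequence of the even dimension) forces the off-diagonal block form in the basis adapted to $p_\pm$, and then a direct block computation of $\{\gamma^a,\gamma^b\}$ yields the relation on $\sigma^a,\tilde\sigma^a$. The paper compresses both steps into one line; you have merely spelled out the parity count behind the anticommutation and displayed the block product explicitly.
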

\begin{proof}
Since the dimension is $2m$, we have the relation $\gamma^a\Gamma=-\Gamma\gamma^a$ which induces their off-diagonal form according to $    \pi_{\,\Gamma\,}$. Then the relation $\{\gamma^a, \gamma^b\}=2\g^{ab}\bbbone_{2^{m}}$ implies from a direct computation that $\sigma^a \tilde\sigma^b +\sigma^b\tilde\sigma^a=2 \g^{ab}\bbbone_{2^{m-1}}$.
\end{proof}
 In the following, we will work with two different contexts to show how the connection present in each of the two cases. We start from a manifold $\Man$ of dimension $2m$ by considering two metrics, $\gr$ and $\gpr$ being respectively positive definite and indefinite. The first context deals with the Clifford representation associated with $\gr$ and the second context deals with the one of $\gpr$. We impose for simplicity that the two metrics $\gr$ and $\gpr$ admit the same local oriented orthonormal basis for $T\Man$, given by $\{E_1, \dots, E_{2m}\}$ as for $\g$. We can now present the specific notations that will be used in both contexts:
\begin{itemize}
	\item \textbf{Context 1: }Consider a Riemannian manifold $(\Man, \gr)$ of dimension $2m$ with tangent bundle $T\Man$ and non degenerate metric $\gr$ of signature $(2m, 0)$. A local oriented orthonormal basis of $T\Man$ is given by $\{E_1, \dots, E_{2m}\}$ so that the metric can be written accordingly as $g^{\scriptscriptstyle R}_{ab}=\gr(E_a, E_b)$. The dual basis $\{\theta_{\scriptscriptstyle R}^1, \dots, \theta_{\scriptscriptstyle R}^{2m}\}$ is defined by the relation $\gr(\theta_{\scriptscriptstyle R}^b, E_a)=\delta^b_a$. The associated Clifford algebra $Cl_{2m}$ is generated by the elements $\clr(v)$ with $v\in T_x\Man$ so that we define the usual gamma matrices $\{\gmr^a \}_{a\in\{1, \dots, 2m\}}$ with $\gmr^a \defeq \clr(\theta_{\scriptscriptstyle R}^a)$ for any $a\in\{1, \dots, 2m\}$ where $\clr$ is chosen so that $\gmr^a$ is unitary, with the defining relations $\{\gmr^a, \gmr^b\}=2\gr^{ab}\bbbone_{2^{m}}$. The Dirac operator associated with $\clr$ is given by $-i\gmr^a\nabla_a^{{\scriptscriptstyle R, S}}$ where $\nabla_a^{{\scriptscriptstyle R, S}}$ is the induced spin connection, see \cite{gracia2013elements} for more details. The grading is defined by $\Gamma_{\scriptscriptstyle R}=i^{m} \prod_{a=1}^{2m} \gmr^a$ and satisfy  $\Gamma_{\scriptscriptstyle R}^2=\mbb$ and $\Gamma_{\scriptscriptstyle R}^\dagger=\Gamma_{\scriptscriptstyle R}$. 
	\item \textbf{Context 2: }Consider the pseudo-Riemannian manifold $(\Man, \gpr)$ of dimension $2m$ together with the non degenerate metric $\gpr$ of signature $(n, 2m-n)$ with $0<n<2m$. The metric $\gpr$ induces the splitting $T\Man=T\Man^+\oplus T\Man^-$, with the local oriented orthonormal basis being given by $\{E_1, \dots, E_{n}\}$ for $T\Man^+$ and $\{E_{n+1}, \dots, E_{2m}\}$ for $T\Man^-$ so that any elements $v\in T_x\Man$ take the form $v=v^+\oplus v^-$ such that $\gpr(v^+,v^+)\geq 0$ and $\gpr(v^-,v^-)\leq 0$. The signature of $\gpr$ is then given by $n$ plus signs and $2m-n$ minus signs according to this basis.  The dual basis $\{\theta_{\scriptscriptstyle PR}^1, \dots, \theta_{\scriptscriptstyle PR}^{2m}\}$ is defined by the relation $\gpr(\theta_{\scriptscriptstyle PR}^b, E_a)=\delta^b_a$. The corresponding Clifford algebra $Cl_{n,\, 2m-n}$ is generated by the elements $\clpr(v)$ with $v\in T_x\Man$ so that we can define the pseudo-Riemannian gamma matrices $\{\gmpr^a \}_{a\in\{1, \dots, 2m\}}$ so that $ \gmpr^a \defeq  \clpr(\theta_{\scriptscriptstyle PR}^a)$ for any $a\in\{1, \dots, 2m\}$ whith $\clpr$ taken so that $\gmpr^a$ is unitary, with the defining relations $\{\gmpr^a, \gmpr^b\}=2\gpr^{ab}\bbbone_{2^{m}}$ with $\g_{ab}^{\scriptscriptstyle PR}=\gpr(E_a, E_b)$. In the same way as for the Riemannian case, we can construct the corresponding Dirac operator $-i\gmpr^a\nabla_a^{{\scriptscriptstyle PR, S}}$ where $\nabla_a^{{\scriptscriptstyle PR, S}}$ is the associated spin connection. The grading $\Gamma_{\scriptscriptstyle PR}$ is given by $\Gamma_{\scriptscriptstyle PR}=i^{-m(2m-1)-n} \prod_{a=1}^{2m} \gmpr^a$ with $\Gamma_{\scriptscriptstyle PR}^2=\mbb$ and $\Gamma_{\scriptscriptstyle PR}^\dagger=\Gamma_{\scriptscriptstyle PR}$.
\end{itemize}

We will impose that $\clr$ and $\clpr$ are chosen so that $\Gamma_{\scriptscriptstyle R}=\Gamma_{\scriptscriptstyle PR}=\Gamma$. This choice is made to avoid the proliferation of specific notations associated with the chiral representation deduced from $\Gamma$. This does not deprive the coming results of their generality.  

 \begin{definition}[Reflection]
Taking $(\Man, \g)$ with $\g$ being any metric, a reflection $r$ is an isometric\footnote{This means that $\g(\, r .\,,\, r .\, )=\g(\, .\,,\, .\, )$.} automorphism on $T\Man$ such that $r^2=\bbbone$.
 \end{definition}
The property $r^2=\bbbone$ implies that $r$ induces the natural splitting $T\Man=T\Man_r^+\oplus T\Man_r^-$ by the two eigenspaces of $r$ corresponding to the eigenvalues $\pm 1$. Then any element $v\in T_x\Man$ can be written accordingly as $v=v_r^+\oplus v_r^-$ with $r v_r^\pm=\pm v_r^\pm$. The two spaces $T\Man_r^\pm$ are orthogonal for $\g$ since $\g(v_r^-, v_r^+)=\g(rv_r^-, rv_r^+)=-\g(v_r^-, v_r^+)$ implies $\g(v_r^-, v_r^+)=0$ for any $v\in T_x\Man$. Then $T\Man_r^+$ admits a local oriented orthonormal basis given by  $\{E_{a_1}, \dots, E_{a_k}\}$ for $T\Man_r^+$ with $k=\dim(T\Man_r^+)$ (resp $\{E_{a_{k+1}}, \dots, E_{a_{2m}}\}$ for $T\Man_r^-$) where $i\in \{1, \dots, 2m\}$ and the indices $a_i\in \{1, \dots, 2m\}$ so that $i\neq j$ induce $a_i\neq a_j$. We define $l$ to be the number of elements of the basis of $T\Man_\g^-$ that are also in the basis of $T\Man_r^+$ i.e. $l=\dim(T\Man_g^-\cap T\Man_r^+)$.

\begin{definition}[Fundamental symmetry associated with a reflection]
	\label{defFundRefl}
	To any reflection operator $r$ can be associated a fundamental symmetry $\calJ_r$ defined by 
\begin{equation}
	\label{EqGenFondSym}
	\calJ_r\defeq i^{-k(k-1)/2-l}\prod_{i=1}^{k}\gamma^{a_i}
\end{equation}
where the indices $\{a_1,\dots, a_k\}$ correspond to the elements of the basis of $T\Man_r^+$. The corresponding indefinite inner product is defined by $\langle \, .\, , \calJ_r\,  .\, \rangle$.  
\end{definition}

The action of the reflection $r$ on the clifford representation is given by the operator $P_r$ defined by the action $P_r: \cl(v)\,\to\, P_r(\cl(v))\defeq \cl(rv)$.
\begin{proposition}
	\label{PropParOp}
The operator $P_r$ is explicitly given by the following inner action of $\calJ_r$:
\begin{equation}
	\label{EqRefleGam}
P_r(\cl(v))=(-1)^{k+1}\calJ_r\cl(v)\calJ_r= \cl(rv)
\end{equation}
\end{proposition}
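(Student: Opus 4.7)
The plan is to reduce the vector identity to a basis-by-basis check using linearity of the Clifford map $\cl$, and then to extract the sign in $\calJ_r\,\cl(v)\,\calJ_r$ by counting anticommutations. First I would observe that since both $P_r$ and $v\mapsto (-1)^{k+1}\calJ_r\,\cl(v)\,\calJ_r$ are linear in $v$, it suffices to test them on the adapted basis $\{E_{a_1},\dots,E_{a_{2m}}\}$ with $\{E_{a_1},\dots,E_{a_k}\}$ spanning $T\Man_r^+$ and $\{E_{a_{k+1}},\dots,E_{a_{2m}}\}$ spanning $T\Man_r^-$. Since $rE_{a_s}=E_{a_s}$ for $s\le k$ and $rE_{a_s}=-E_{a_s}$ for $s>k$, the proposition reduces to proving $\calJ_r\gamma^{a_s}\calJ_r=(-1)^{k+1}\gamma^{a_s}$ for $s\le k$ and $\calJ_r\gamma^{a_s}\calJ_r=(-1)^{k}\gamma^{a_s}$ for $s>k$, after which the prefactor $(-1)^{k+1}$ exactly reproduces $\cl(rE_{a_s})$ in each case.

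Next I would check the two ingredients: that $\calJ_r^2=\bbbone$ (so that $\calJ_r^{-1}=\calJ_r$ and the expression $\calJ_r\cl(v)\calJ_r$ is genuine conjugation) and then compute the commutation signs. Using $\{\gamma^{a_i},\gamma^{a_j}\}=0$ for $i\neq j$ (orthogonality of the $E_{a_i}$) and $(\gamma^{a_i})^2=\g^{a_ia_i}\bbbone$, reordering $\bigl(\prod_{i=1}^k\gamma^{a_i}\bigr)^2$ produces the permutation sign $(-1)^{k(k-1)/2}$ times $\prod_{i=1}^k\g^{a_ia_i}$; the latter equals $(-1)^l$ by the very definition of $l$. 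Combined with $(i^{-k(k-1)/2-l})^2=(-1)^{k(k-1)/2+l}$, one finds $\calJ_r^2=\bbbone$. For the commutation sign, note that for $s\le k$ the factor $\gamma^{a_s}$ already appears inside $\calJ_r$ and anticommutes only with the $k-1$ other factors, giving $\gamma^{a_s}\calJ_r=(-1)^{k-1}\calJ_r\gamma^{a_s}$; for $s>k$ the vector $E_{a_s}$ is orthogonal to all of $E_{a_1},\dots,E_{a_k}$, so $\gamma^{a_s}$ anticommutes with every factor, yielding $\gamma^{a_s}\calJ_r=(-1)^{k}\calJ_r\gamma^{a_s}$. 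Left-multiplying by $\calJ_r$ and using $\calJ_r^2=\bbbone$ converts these into exactly the two sign identities required above.

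Putting the two cases together, multiplication by the global prefactor $(-1)^{k+1}$ turns Case $s\le k$ into $+\gamma^{a_s}=\cl(rE_{a_s})$ and Case $s>k$ into $-\gamma^{a_s}=\cl(rE_{a_s})$; by linearity of $\cl$ and of conjugation, the identity extends from basis elements to all $v\in T\Man$, so $(-1)^{k+1}\calJ_r\cl(v)\calJ_r=\cl(rv)=P_r(\cl(v))$. The only non-trivial obstacle is sign bookkeeping: the exponent $-k(k-1)/2-l$ on $i$ in the definition of $\calJ_r$ is tuned precisely so that the permutation sign from reordering the product and the signature contribution $(-1)^l$ from those basis vectors that lie in both $T\Man_g^-$ and $T\Man_r^+$ cancel to give $\calJ_r^2=\bbbone$. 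Once this identity is secured, the two commutation counts are mechanical and the proposition follows.
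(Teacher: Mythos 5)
Your proof is correct and follows the same strategy as the paper's: reduce to the adapted basis $\{E_{a_i}\}$ by linearity of $\cl$, then track anticommutation signs of $\gamma^{a_s}$ against the $k$ factors in $\calJ_r$ and use $\calJ_r^2=\bbbone$ to combine. The only difference is one of completeness: the paper simply invokes ``the permutation relations'' and takes $\calJ_r^2=\bbbone$ for granted (it is built into the term ``fundamental symmetry'' in Definition~\ref{defFundRefl}), whereas you explicitly count the $k-1$ versus $k$ anticommutations and verify that the normalization $i^{-k(k-1)/2-l}$ is tuned so that the reordering sign $(-1)^{k(k-1)/2}$ and the signature factor $\prod_i\g^{a_ia_i}=(-1)^l$ cancel to give $\calJ_r^2=\bbbone$; this is a welcome addition since the paper asserts that property without showing it for the concrete formula \eqref{EqGenFondSym}.
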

\begin{proof}
We have that 
\begin{align*}
\cl(rE_{a_i})=\left\{
\begin{array}{ll}
 	\,\,\, 	\cl(E_{a_i})= \gamma_{a_i}&\text{ for }  \quad i\in\{ 1, \dots , k\},\\ 
	 -\cl(E_{a_i})=-\gamma_{a_i}& \text{ for } \quad i\in\{k+1, \dots , 2m\} \end{array}\right.
\end{align*}
 
since $\cl$ is linear. In the same way, using the permutation relations of the gamma matrices and the fact that $\calJ_r^2=\bbbone$ we have
\begin{equation}
(-1)^{k+1}\calJ_r\gamma_{a_i}\calJ_r=\left\{
	\begin{array}{ll}
		\,\,\, \gamma_{a_i} &\text{ for }  \quad i\in\{ 1, \dots , k\},\\ 
		   -\gamma_{a_i}&\text{ for } \quad i\in\{k+1, \dots , 2m\} \end{array}\right.
\end{equation}
hence the equality $(-1)^{k+1}\calJ_r\cl(E_{a_i})\calJ_r=\cl(rE_{a_i})$ so the result of equation \eqref{EqRefleGam} by the linearity of the Clifford representation $\cl$ when writing $v=\sum_{a=1}^{2m}v^aE_a$ with $v^a\in \bbR$.
\end{proof}

\begin{proposition}[Signature change by the reflection operator]
Starting from a metric $\g$ and taking a reflection $r$, a new metric $\g^r$ can be defined by $g^r(\, .\,,\, .\, )\defeq\g(\, .\,,\,r .\, )$ are equivalently at the level of the Clifford algebra taking $v, w\in T\Man$ by
\begin{equation}
g^r(v,w)\mbb\defeq\frac{1}{2}\{\cl(v), \cl(r w)\}=\frac{1}{2}\{\cl(v), P_r(\cl(w))\}. 
\end{equation}
The signature of $g^r$ according to $\{E_a, \dots, E_{2m}\}$ is obtained from the signature of $\g$ where (only) the signs corresponding to the $a_i$'s entries with $i\in\{k+1, \dots , 2m\}$ have been flipped. 
\end{proposition}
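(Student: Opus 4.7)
The plan is to handle the two assertions separately: first the Clifford algebra identity, and then the signature count by a direct basis computation.

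For the Clifford identity, I would use only the defining anticommutation relation $\{\cl(u), \cl(w)\} = 2\,\g(u,w)\bbbone$ of the Clifford algebra associated with $\g$. Substituting $u = v$ and $w \mapsto rw$ gives immediately
\begin{equation}
\tfrac{1}{2}\{\cl(v), \cl(rw)\} = \g(v, rw)\,\bbbone = g^r(v,w)\,\bbbone.
\end{equation}
The second equality $\cl(rw) = P_r(\cl(w))$ is exactly the content of Proposition \ref{PropParOp}, so the full chain of equalities follows by substitution.

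For the signature claim I would work in the basis $\{E_{a_1}, \dots, E_{a_{2m}}\}$ adapted to the splitting $T\Man = T\Man_r^+ \oplus T\Man_r^-$, namely $r E_{a_i} = +E_{a_i}$ for $i \leq k$ and $r E_{a_i} = -E_{a_i}$ for $i > k$. Off-diagonals vanish, since $g^r(E_{a_i}, E_{a_j}) = \pm\,\g(E_{a_i}, E_{a_j}) = 0$ whenever $i \neq j$, using orthogonality of $\{E_a\}$ for $\g$. On the diagonal, $g^r(E_{a_i}, E_{a_i}) = \g(E_{a_i}, E_{a_i})$ for $i \leq k$ and $g^r(E_{a_i}, E_{a_i}) = -\g(E_{a_i}, E_{a_i})$ for $i > k$. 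Hence $\{E_{a_i}\}$ remains an orthonormal basis for $g^r$, and the signature is obtained from that of $\g$ by flipping precisely the signs indexed by the $a_i$ with $i \in \{k+1, \dots, 2m\}$, as claimed.

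The main point requiring attention, though not really an obstacle, is to verify that $g^r$ is a well-defined symmetric and non-degenerate bilinear form so that speaking of its \emph{signature} makes sense. Symmetry follows from combining the symmetry of $\g$ with the isometry property of $r$: $\g(v, rw) = \g(rw, v) = \g(r(rw), rv) = \g(w, rv)$, using $r^2 = \bbbone$. Non-degeneracy is immediate since $r$ is an automorphism of $T\Man$ and $\g$ is non-degenerate. Everything else reduces to the basis computation above.
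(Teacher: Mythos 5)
Your proof is correct and follows essentially the same route as the paper's, which simply computes the diagonal components $g^r_{a_ia_i}=\tfrac{1}{2}\{\gamma^{a_i},P_r(\gamma^{a_i})\}$ and observes the sign flip for $i\in\{k+1,\dots,2m\}$. Your additions — verifying the Clifford identity from the defining anticommutation relation, checking off-diagonals vanish, and confirming symmetry and non-degeneracy of $g^r$ so that its signature is well-defined — are sound and fill in details the paper leaves implicit.
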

\begin{proof}
We have that $g_{a_ia_i}^r\mbb=\frac{1}{2}\{\gamma^{a_i}, P_r(\gamma^{a_i})\}$ is equal to $\frac{1}{2}\{\gamma^{a_i}, \gamma^{a_i}\}=g_{a_ia_i}\mbb$ when $i\in\{ 1, \dots , k\}$ and $-\frac{1}{2}\{\gamma^{a_i}, \gamma^{a_i}\}=-g_{a_ia_i}\mbb$ when $i\in\{k+1, \dots , 2m\}$.
\end{proof}

 \begin{definition}[Spacelike reflection]
 	Taking $(\Man, \gpr)$ to be a pseudo-Riemannian manifold, a spacelike reflection $r$ is a reflection for the metric $\gpr$ so that
 	\begin{equation}
 		\label{EqmetDef}
 		\gr(\, .\,,\, .\, )\defeq \gpr(\, .\,,\,r .\, )
 	\end{equation}
 	is a positive definite metric on $T\Man$.
 \end{definition}
In this case, the splitting $T\Man=T\Man^+\oplus T\Man^-$ coming with $\gpr$ and the one coming with $r$ coincide ($T\Man_r^\pm=T\Man^\pm$). We then have that any vector $v\in T_x\Man$ can be written as $v=v^+\oplus v^-$ with $v^\pm\in T_x\Man^\pm$ so that $r v^\pm=\pm v^\pm$. The passage from $\gr$ to $\gpr$ with the spacelike reflection can be considered as a way to implement the signature change.

From equation \eqref{EqGenFondSym} we can now define two fundamental symmetries implementing the spacelike reflection $r$, according to the two contexts:
 \begin{equation}
	\label{eqSymCr}
	\fnr\defeq i^{-n(n-1)/2}\gmr^1\dots \gmr^n\qquad\qquad\qquad 	\fnpr\defeq i^{-n(n-1)/2}\gmpr^1\dots \gmpr^n.
\end{equation} 
 
for the inner-products $\langle \, .\, , \fnr\,  .\, \rangle$ and $\langle \, .\, , \fnpr\,  .\, \rangle$ respectively, with $k=n=\dim(T\Man^+)$ as previously and $l=\dim(T\Man^-\cap T\Man_r^+)=0$ since $T\Man^-=T\Man_r^-$ is orthogonal to $T\Man_r^+$. In the same way as before, we define the reflection operators coming with the spacelike reflection $r$ as $\prr(\,.\,)=(-1)^{n+1}\fnr(\,.\,)\fnr$ and $\prpr(\,.\,)=(-1)^{n+1}\fnpr(\,.\,)\fnpr$, offering also a generalized notion of parity operator\footnote{The parity operator is defined in quantum field theory as the operator that reverses the space-like coordinates, i.e. the ones corresponding to negative signatures in the metric.}. 

 \begin{remark}
	The operators $\fnr$ and $\fnpr$ are equal if we make the choice to relate $c$ and $\clpr$ by the relation $\clpr(E_a)=\clr(E_a)$ for $a\in\{1, \dots, n\}$ and $\clpr(E_a)=i\clr(E_a)$ for $a\in\{n+1, \dots, 2m\}$. This is the usual way to implement the signature change, known as Wick rotation.
\end{remark}

These two fundamental symmetries permit to realise the relation of equation \eqref{EqmetDef} or equivalently $\gpr(\, .\,,\, .\, )= \gr(\, .\,,\,r .\, )$ at the level of the Clifford representation. Taking $v, w$ to be vectors in $T\Man$ we have in the two contexts
\begin{itemize}
\item \textbf{Context 1: } starting from $\gr(v, w)\mbb=\frac{1}{2}\{\clr(v), \clr(w)\}$ the metric $\gpr$ is obtained by
\begin{equation}
	\label{EqRelMetC1}
\gpr(v, w)\mbb=\gr(v, rw)\mbb=\frac{1}{2}\{\clr(v), \clr(rw)\}= \frac{1}{2}\{\clr(v), \prr(\clr(w))\}. 
\end{equation}
\item \textbf{Context 2: }starting from $\gpr(v, w)\mbb=\frac{1}{2}\{\clpr(v), \clpr(w)\}$ the metric $\gr$ is obtained by
\begin{equation}
	\label{EqRelMetC2}
\gr(v, w)\mbb=\gpr(v, rw)\bbbone=\frac{1}{2}\{\clpr(v), \clpr(rw)\}=\frac{1}{2}\{\clpr(v), \prpr(\clpr(w))\}. 
\end{equation}
\end{itemize}
The relations of equations \eqref{EqRelMetC1} and \eqref{EqRelMetC2} offer an explicit link between Krein spaces through their fundamental symmetries (in correspondence with spacelike reflection) and the process of signature change, realised here in an algebraic way. These results are an extension of the pioneering work done in \cite{strohmaier2006noncommutative}. 

\subsection{The spectral triples of the connection}
\label{SubSecDiffST}
 
Let's consider the algebra $\calA=C^\infty(\Man)\otimes \mathbb{C}^2$ together with the Hilbert space $\calH=L^2(\Man, \Sp)$ with elements $a=(f,g)\in\calA$ presented according to the chiral representation $\pi_{\,\Gamma\,}$ defined in equation \eqref{EqRho}. The action of $\rho$ on $a$ is given by $\rho(f,g)= (g,f)$. Thanks to proposition \ref{PropInnTw}, we know that $\rho$ is inner. Following \cite{martinetti2024torsion} the twist by grading of the spin manifold can be generated by a unitary action $\rho(a)=\tw a \tw^\dagger$ where $\tw$ can take the general form
\begin{align*}
	\tw=\alpha \prod_{i=1}^{n\,\, odd}\gamma^{a_i}
\end{align*}
with $\alpha$ a complex number such that $\alpha\alpha^*=1$, $a_i\neq a_j$ for $i\neq j$. 

To obtain a Krein space and set up the dualities, we require the twist to be fundamental. We then have $\rho(\, .\,)=\tw(\, .\,)\tw$ where $\tw$ takes the form of equation \eqref{EqCompJ}. The natural question becomes which fundamental symmetries implement the fundamental twist of the connection?
\begin{proposition}
	\label{PropFundCompTw}
Fundamental symmetries $\calJ_r$ of the form of equation \eqref{EqGenFondSym} corresponding to a reflection of an odd number of dimensions are the only ones to implement the (fundamental) twist by the grading.
\end{proposition}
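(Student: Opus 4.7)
The plan is to start from the general parametrization of $\tw$ recalled just above the proposition, namely $\tw = \alpha \prod_{i=1}^{k} \gamma^{a_i}$ with $k$ odd, distinct indices $a_i \in \{1,\dots,2m\}$ and $|\alpha|=1$, and then to impose the fundamentality condition in order to pin down $\alpha$ to exactly the normalisation phase appearing in \eqref{EqGenFondSym}. Since $\tw$ is already unitary by construction, fundamentality ($\tw^2=\bbbone$) reduces to the single scalar requirement $\tw = \tw^\dagger$, so only selfadjointness needs to be imposed.

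To evaluate $\tw^\dagger$ I would proceed in two independent steps. First, reverse the ordered product of the $\gamma^{a_i}$'s using the anticommutation of distinct gamma matrices, which contributes a sign $(-1)^{k(k-1)/2}$ coming from the $\binom{k}{2}$ transpositions needed to invert the order. Second, take the adjoint of each individual factor: combining unitarity with $(\gamma^a)^2 = \g^{aa}\bbbone$ forces $(\gamma^a)^\dagger = \g^{aa}\gamma^a$, so every index lying in the negative part of the signature of $\g$ flips the sign and altogether contributes $(-1)^l$, where $l$ is exactly the integer entering \eqref{EqGenFondSym}. Collecting both signs gives $\tw^\dagger = \bar\alpha\,(-1)^{k(k-1)/2 + l}\prod_{i=1}^{k}\gamma^{a_i}$, so the condition $\tw = \tw^\dagger$ boils down to $\alpha^2 = (-1)^{k(k-1)/2 + l}$, whose unit-modulus solutions are exactly $\alpha = \pm\,i^{-k(k-1)/2 - l}$.

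I would then close the argument geometrically by letting $r$ be the reflection on $T\Man$ whose $+1$-eigenspace is spanned by $\{E_{a_1},\dots,E_{a_k}\}$; comparing the resulting expression for $\tw$ with Definition \ref{defFundRefl} yields $\tw = \pm\calJ_r$, the overall sign being irrelevant since $-\calJ_r$ is the fundamental symmetry of the same reflection. The oddness of $k$ is not an extra hypothesis but is forced by the block-off-diagonal structure of Proposition \ref{PropInnTw} (equivalently, by the anticommutation $\tw\Gamma = -\Gamma\tw$ which the twist by grading imposes on $\tw$ via the exchange $\rho(f,g) = (g,f)$), so the reflection produced at the end is automatically of odd dimension.

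The only step requiring real care is the sign bookkeeping: the anticommutation factor $(-1)^{k(k-1)/2}$ and the adjoint factor $(-1)^l$ must combine to produce exactly the square of the phase $i^{-k(k-1)/2 - l}$ appearing in \eqref{EqGenFondSym}. Once this match is verified, uniqueness is immediate, because the starting parametrization already exhausts all unitaries implementing the twist by grading and selfadjointness leaves only an overall sign, which does not change the associated reflection $r$.
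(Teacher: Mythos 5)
Your proof is correct, but it runs in the opposite direction from the one the paper takes, and it actually establishes a slightly stronger version of the claim. The paper's proof never touches the parametrization $\tw=\alpha\prod\gamma^{a_i}$: it simply uses $\Gamma\gamma^b=-\gamma^b\Gamma$ to deduce $\gamma^b a=\rho(a)\gamma^b$ for $a$ in the chiral representation, and therefore $\calJ_r a\calJ_r=\rho^k(a)\calJ_r^2=\rho^k(a)$, which equals $\rho(a)$ exactly when $k$ is odd and collapses to $a$ when $k$ is even. That settles, within the $\calJ_r$ family, which ones implement the twist. You instead take the general unitary $\tw=\alpha\prod_{i=1}^{k}\gamma^{a_i}$ (oddness of $k$ being forced by $\tw\Gamma=-\Gamma\tw$, as you correctly observe), impose $\tw=\tw^\dagger$, and carefully track the two sign contributions: $(-1)^{k(k-1)/2}$ from reversing the product and $(-1)^l$ from $(\gamma^a)^\dagger=\g^{aa}\gamma^a$, arriving at $\alpha^2=(-1)^{k(k-1)/2+l}$ and hence $\alpha=\pm i^{-k(k-1)/2-l}$, i.e.\ $\tw=\pm\calJ_r$. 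Your sign bookkeeping is accurate, including the identification of $l$ with $\dim(T\Man_g^-\cap T\Man_r^+)$ and the observation that for unitary $\tw$ selfadjointness already gives $\tw^2=\bbbone$. What your route buys is a genuine derivation of the \emph{uniqueness} (only) part: the paper's two-line computation checks that $\calJ_r$ works for odd $k$ and fails for even $k$, but does not by itself rule out other fundamental symmetries outside the $\calJ_r$ family; your argument, by going through the general parametrization and showing selfadjointness leaves only the overall $\pm$ sign free, closes that gap explicitly. Conversely, the paper's version is shorter and self-contained (it does not rely on the parametrization imported from the torsion paper). Both are valid; yours is the more complete argument for the stated proposition.
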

\begin{proof}
Taking an element $a=(f,g)\in\calA$, we have that $\gamma^b a=\rho(a)\gamma^b$ for all $b\in\{1, \dots, 2m\}$ thanks to the relation $\Gamma\gamma^b=-\gamma^b\Gamma$. Then $\calJ_r a =\rho(a)\calJ_r$ if $k$ is odd and $\calJ_r a =a\calJ_r$ if $k$ is even. Then, when $k$ is odd we have
	\begin{equation}
		\calJ_r a \calJ_r=(\calJ_r)^2 \rho(a)= \rho(a). 
	\end{equation}
\end{proof}
In this case, equation \eqref{EqRefleGam} implies that the corresponding twist $\rho(\, .\,)=\calJ_r(\, .\,)\calJ_r$ implements the action of the reflection operator:
\begin{equation}
	\label{EqTwAsRefl}
P_r(\cl(v))=\rho(\cl(v))= \cl(rv).
\end{equation}
From equation \eqref{eqSymCr}, we can now define the two fundamental symmetries, corresponding to a spacelike reflection between $\gr$ and $\gpr$ in each context
\begin{center}
	\begin{minipage}[b]{.41\textwidth}
		\vspace{-\baselineskip}
		\begin{equation}
			\fnr\defeq i^{-n(n-1)/2}\prod_{i=1}^{n\,\, odd}\gmr^{i} \label{EqFondSymCompC1}
		\end{equation}
	\end{minipage}%
	\hfill    \hfill and\hfill
	\begin{minipage}[b]{.44\textwidth}
		\vspace{-\baselineskip}
		\begin{equation}
			\fnpr\defeq i^{-n(n-1)/2}\prod_{i=1}^{n\,\, odd} \gmpr^{i}.	\label{EqFondSymCompC2}
		\end{equation}
	\end{minipage}
\end{center}

From now, every fundamental symmetries will be taken of the form of equations \eqref{EqFondSymCompC1} for context 1 and \eqref{EqFondSymCompC2} for context 2, then compatible with the twist. The compatibility of the space-like reflection's fundamental symmetry with the twist will then impose that we consider only metrics $\gpr$ for which $n=\dim(T\Man^+)$ is odd. We will consider two ways to implement the twist, corresponding to each context:
\begin{itemize}
\item \textbf{Context 1: } $\rhr(\, .\,)=\twr(\, .\,)\twr$ with $\twr=\fnr$ given by equation \eqref{EqFondSymCompC1}.
\item \textbf{Context 2: }$\rhpr(\, .\,)=\twpr(\, .\,)\twpr$ with $\twpr=\fnpr$ given by equation \eqref{EqFondSymCompC2}.
\end{itemize}
 
\begin{remark}
	\label{RqTwOnBH}
Note that $\rhr(a)=\rhpr(a)=\rho(a)$ for any $a\in\calA$ but $\rhr(A)\neq\rhpr(A)$ for $A\in\calB(\calH)$ so the need to introduce these new notations for the action of $\rho$ on all $\calB(\calH)$.
\end{remark}

The twist is then the spacelike reflection operator (or parity operator):
\begin{center}
	\begin{minipage}[b]{.41\textwidth}
		\vspace{-\baselineskip}
		\begin{equation}
			\rhr(\clr(v))=\clr(rv) \label{EqRhPar1}
		\end{equation}
	\end{minipage}%
	\hfill    \hfill and\hfill
	\begin{minipage}[b]{.44\textwidth}
		\vspace{-\baselineskip}
		\begin{equation}
			\rhpr(\clpr(v))=\clpr(rv).	\label{EqRhPar2}
		\end{equation}
	\end{minipage}
\end{center}

Then equations \eqref{EqRelMetC1} and \eqref{EqRelMetC2} became 
\begin{align}
	\label{EqRelMetC1t}
	&\gpr(v, w)\mbb=\gr(v, rw)\mbb= \frac{1}{2}\{\clr(v), \rhr(w)\}\\ 
	\label{EqRelMetC2t}
	&\gr(v, w)\mbb=\gpr(v, rw)\mbb=\frac{1}{2}\{\clpr(v), \rhpr(\clpr(w))\}. 
\end{align}

\begin{remark}
	\label{RqChoixCliff}
 $\Dir$ and $\KDir=\tw\Dir$ cannot be simultaneously associated with a Clifford structure because the product of $\tw$ with the gamma matrices that represent the Clifford structure does not give back a representation of Clifford algebra. Then either $\Dir$ or $\KDir$ can be associated with such a Clifford structure. 
\end{remark}
We will then define new notations in order to indicate whether or not the Dirac operators are associated with Clifford structures. We keep the notations $\Dir$ and $\KDir$ when this is the case and adopt the notations $\HDir$ et $\HTDir$ when they do not have such. The corresponding “modified” version of Clifford structure will be defined later in definition \ref{DefTwCliff}.
 
  \begin{lemma}
  	\label{PropTwSelfgm}
	Taking $\rhpr(\, .\,)=\twpr(\, .\,)\twpr$ with $\twpr=\fnpr$ and $\Tadjpr$ the corresponding adjoint, we then have that
	\begin{equation}
		\label{EqrelatgammaPR}
		(\gmpr^a)^\Tadjpr=\gmpr^a.
	\end{equation} 
\end{lemma}
\begin{proof}
	We have that $(\gmpr^a)^2=\mbb$ if $a\in\{ 1, \dots , n\}$ and $(\gmpr^a)^2=-\mbb$ if $a\in\{n+1, \dots , 2m\}$. Since $\gmpr^a$ are unitary matrices, we have that $(\gmpr^a)^\dagger=(\gmpr^a)^{-1}$ and then that
	\begin{equation}
		(\gmpr^a)^\dagger=\left\{
		\begin{array}{ll}
			\,\, \,\gmpr^a &\text{ for }  \quad a\in\{ 1, \dots , n\},\\ 
			 -\gmpr^a&\text{ for } \quad a\in\{n+1, \dots , 2m\}. \end{array}\right.  
	\end{equation}
	On the other side, from equation \eqref{EqTwAsRefl} we have that 
	\begin{equation}
		\rhpr(\gmpr^a)=\left\{
		\begin{array}{ll}
			\,\, \,\gmpr^a &\text{ for }  \quad a\in\{ 1, \dots , n\},\\ 
			 -\gmpr^a&\text{ for } \quad a\in\{n+1, \dots , 2m\}. \end{array}\right.  
	\end{equation}
 By identification, we have $\rhpr(\gmpr^a)= (\gmpr^a)^\dagger$ which implies $(\gmpr^a)^\Tadjpr=\gmpr^a$ so the result.
\end{proof}
It then follows that the operator $-i\gmpr^a\nabla_a^{{\scriptscriptstyle PR, S}}$ is $\twpr$-selfadjoint so that an adapted notation will be given by $\TprDir=-i\gmpr^a\nabla_a^{{\scriptscriptstyle PR, S}}$. Its dual selfadjoint Dirac operator is given by $\prDir\defeq \twpr \TprDir$. In the same way for context $1$, starting from the selfadjoint Dirac operator $\rDir\defeq-i\gmr^a\nabla_a^{{\scriptscriptstyle R, S}}$ we can define its dual $\twr$-selfadjoint Dirac operator $\HTrDir\defeq \twr \rDir$.

Remarks \ref{RQExchange} and \ref{RqChoixCliff} lead us to distinguish two possibilities:
\begin{itemize}
	\item \textbf{Context 1: }we start from $\rDir$ which is associated with a Riemannian Clifford structure, we have $\Gamma\rDir=-\rDir\Gamma$ with $\rDir\, :\, \calH_\pm\,\to\, \calH_\mp$ and then $\Gamma\HTrDir=\HTrDir\Gamma$ with $\HTrDir\, :\, \calH_\pm\,\to\, \calH_\pm$ for the dual Dirac operator $\HTrDir$.
	\item \textbf{Context 2: }we start from $\TprDir$ associated with a pseudo-Riemannian Clifford structure. We have $\Gamma\TprDir=-\TprDir\Gamma$ with $\TprDir\, :\, \calH_\pm\,\to\, \calH_\mp$ and then $\Gamma\prDir=\prDir\Gamma$ with $\prDir\, :\, \calH_\pm\,\to\, \calH_\pm$ for the dual Dirac operator $\prDir$.
\end{itemize} 

\begin{definition}
	The sub-algebras $\calA^{\prime}$ and $\calA^{\prime\prime}$ of $\calA$ are defined by
	\begin{equation}
		\calA^{\prime}\defeq\{a\in\calA\, / \, \rho(a)=a\}\qquad\quad\text{and}\quad\qquad 	\calA^{\prime\prime}\defeq\{a\in\calA\, / \, \rho(a)\neq a\}
	\end{equation}
	which is equivalently given by the set of elements $a\in\calA$ of the form $a=(f,f)$ for $a\in \calA^{\prime}$ and by the set of elements $a=(f,g\neq f)$ for $a\in \calA^{\prime\prime}$.
\end{definition}

\begin{lemma}
	\label{PropComutBorn}
Taking $\rDir=-i\gmr^a\nabla_a^{{\scriptscriptstyle R, S}}$ and $\TprDir=-i\gmpr^a\nabla_a^{{\scriptscriptstyle PR, S}}$ with corresponding Dirac operators $\HTrDir=\twr \rDir$ and $\prDir= \twpr \TprDir$ then we have that the commutators $[\prDir, a]$, $[\HTrDir, a]$, $[\TprDir, a]_\rho$ and $[\rDir, a]_\rho$ are bounded on $\calA$ and that the commutators $[\prDir, a]_\rho$, $[\HTrDir, a]_\rho$, $[\TprDir, a]$ and $[\rDir, a]$ are unbounded on $\calA^{\prime\prime}$ and bounded on $\calA^{\prime}$.
\end{lemma}
\begin{proof}
Using the fact that $\forall a\in\calA$ and $b\in\{1, \dots, 2m\}$ we have that $\gmr^b$ and $\gmpr^b$ twist commute with the algebra, i.e. $[\gmr^b, a]_\rho=[\gmpr^b, a]_\rho=0$ and taking $\psi\in\calH$, we have that $\rDir(a\psi)=\rho(a)\rDir(\psi)+\rDir(a)\psi$ and $\TprDir(a\psi)=\rho(a)\TprDir(\psi)+\TprDir(a)\psi$. The computation of the different commutators gives
\begin{align}
	\label{EqCom1}
&[\rDir, a]=\twr[\HTrDir, a]_\rhr=\rDir(a)+(\rho(a)-a)\rDir\\
\label{EqCom2}
&[\TprDir, a]=\twpr[\prDir, a]_\rhpr=\TprDir(a)+(\rho(a)-a)\TprDir\\
\label{EqCom3}
& [\rDir, a]_\rhr=\twr[\HTrDir, a]=  \rDir(a).\\
\label{EqCom4}
&[\TprDir, a]_\rhpr=\twpr [\prDir, a]= \TprDir(a)
\end{align}
We directly observe that the commutators $[\TprDir, a]_\rhpr$ (then $[\prDir, a]$) and $[\rDir, a]_\rhr$ (then $[\HTrDir, a]$) are bounded on all $\calA$. Since $\TprDir$ and $\rDir$ are not bounded, the commutators $[\rDir, a]$ (then $[\HTrDir, a]_\rhr$) and $[\TprDir, a]$ (then $[\prDir, a]_\rhpr$) are bounded only when the twist is trivial for the chosen algebra, then when $a\in \calA^{\prime}$.
\end{proof}

\begin{remark}
This is interesting to note that doubling the algebra $C^\infty(\Man)$ according to the chiral spaces leads to the fact that $[\rDir, a]$ and $[\TprDir, a]$ are no longer bounded and must therefore be replaced by the twisted commutators $[\rDir, a]_\rhr$ and $[\TprDir, a]_\rhpr$ in order to define a spectral triple, as mentioned in the introduction.
\end{remark}
 
Taking $\calH=L^2(\Man, \Sp)$ and $\calK_\tw$ being equivalent to $L^2(\Man, \Sp)$ where $\langle \, .\, ,\, .\, \rangle$ is replaced by $\langle \, .\, ,\, .\, \rangle_\tw$ and using lemma \ref{PropComutBorn}, we can now define the generalized spectral triples on $\calA^\prime$ with the index corresponding to the context (1 or 2) taken to define the spectral triple:
\begin{itemize}
\item  The $ST_{(1)}\defeq( \calA^{\prime} , \calH,  \rDir , J, \Gamma)$ and $\twpr$-$PRST\defeq(\calA^{\prime}, \calK_\twpr,  \TprDir, J, \Gamma)$.
\item The $\twpr$-$TST\defeq(\calA^{\prime}, \calH,  \prDir, J, \Gamma, \twpr)$ and $\twr$-$TPRST\defeq(\calA^{\prime}, \calK_\twr,  \HTrDir, J, \Gamma, \twr)$. 
\end{itemize}
and the 4 other ones on all $\calA$:
\begin{itemize}
\item  The $ST_{(2)}\defeq( \calA , \calH,  \prDir , J, \Gamma)$ and $\twr$-$PRST\defeq(\calA, \calK_\twr,  \HTrDir, J, \Gamma)$.
\item The $\twr$-$TST\defeq(\calA, \calH,  \rDir, J, \Gamma, \twr)$ and $\twpr$-$TPRST\defeq(\calA, \calK_\twpr,  \TprDir, J, \Gamma, \twpr)$. 
\end{itemize}
The chiral representation (equation \eqref{EqRho}) and the corresponding action of the twist only depend on the choice of $\calA$ and $\Gamma$, then are the same in each context.

Then connection 1 extends to two dualities according to the chosen context, i.e. connection 1.1 for context 1 (resp 1.2 for context 2):
\begin{itemize}
	\item \textbf{connection 1.1:} between $ST_{(1)}$ and $\twr$-$TPRST$.
	\item \textbf{connection 1.2:} between $ST_{(2)}$ and $\twpr$-$TPRST$.
\end{itemize}
 And similarly for connection 2:
\begin{itemize}
	\item \textbf{connection 2.1:} between $\twr$-$PRST$ and $\twr$-$TST$.
	\item \textbf{connection 2.2:} between $\twpr$-$PRST$ and $\twpr$-$TST$.
\end{itemize} 
\subsection{Signature change by the $\tw$-morphism}
\label{SubSectDualMet}

We can now show how the $\tw$-morphism between the previous generalized spectral triples implements a signature change from $\gr$ to $\gpr$, as  presented in subsection \ref{SubsecWickPar}.

\begin{lemma}[Useful relations]
	Taking $a=(f, g)$ and ${\bar a}=(\bar f, \bar g)$ its complex conjugate and noting that $i[\rDir, a]=\clr(d(a))$ and $i[\TprDir, a]=\clpr(d(a))$, we have $\forall a\in \calA$:
	\begin{align}
		\label{EqRelclradj}
		&\clr(da)^\dagger=\clr(d(\rho({\bar a})))\\
		\label{EqRelclpradj}
		&\clpr(da)^\dagger=\clpr(rd({\bar a}))
	\end{align}
\end{lemma}
\begin{proof}
	Concerning equation \eqref{EqRelclradj} we have that
	\begin{equation}
		\clr(da)^\dagger=\partial_b({\bar a})(\gmr^b)^\dagger=\partial_b({\bar a})\gmr^b=\gmr^b\rho(\partial_b({\bar a}))=\gmr^b\partial_b(\rho({\bar a}))=\clr(d(\rho({\bar a})))
	\end{equation}
	where we use that $\gmr^b a=\rho(a)\gmr^b$ for any $a\in\calA$.
	
For equation \eqref{EqRelclpradj}
	\begin{equation}
		\clpr(da)^\dagger=\partial_b({\bar a})\rhpr(\gmpr^b)=\rhpr(\gmpr^b)\rhpr(\partial_b({\bar a}))=\rhpr(\clpr(d({\bar a})))=\clpr(rd({\bar a}))
	\end{equation}
	where we use that $(\gmpr^a)^\dagger=\rhpr(\gmpr^a)$ (see lemma \ref{PropTwSelfgm}) and equation \eqref{EqRhPar2}.
\end{proof}

In the usual case, for the $ST_{(1)}$, the distance formula is provided by
\begin{align}
	\label{MetForm}
	d_\gr(x,y)=\sup_{a\in \calA^\prime}\{ |a(x)-a(y)|\, \mid\, \| [\rDir, a]\|\leq 1 \}.
\end{align}
The metric information is hidden in the term $\| [\rDir, a]\|$, i.e., for any $a\in\calA^\prime$
\begin{align*}
	\| [\rDir, a]\|^2&=\frac{1}{2}\left(\| \clr(da)\|^2+ \| \clr(da)^\dagger\|^2 \right)\\
	&=\frac{1}{2}\sup_{\substack{\psi\in\calH\\  \, \|\psi \| =1}}\{ \langle \left(\clr(d(\rho({\bar a})))\clr(da)+\clr(da)\clr(d(\rho({\bar a}))) \right)\psi , \psi \rangle  \}\\
	&=\sup_{\substack{\psi\in\calH\\  \, \|\psi \| =1}}\{ \langle \gr(grad \, a, grad \, {\bar a})\psi , \psi \rangle  \}= \|grad\, a\|_\infty^2
\end{align*}
using equation \eqref{EqRelclradj} at the second line, the relation between the Clifford action and the metric and that $\rho(a)=a$ for $a\in\calA^\prime$. The last term is the Lipschitz norm of $a$ for the metric $\gr$, see \cite{gracia2013elements} for more information. Note that the self-adjointness of $\clr(da)$ is an essential property to make the sum of supremums a supremum of the sum, as this implies that the supremum is attained for the same vector $\psi$ in the expression of $\| \clr(da)\|^2$ and $\| \clr(da)^\dagger\|^2$.

There is no such distance formula in the compact pseudo-Riemannian spectral triple framework. In particular, our use of the distance formula \eqref{MetForm} will always refer to the underlying Riemannian metric $\gr$ encoded in the $ST_{(1)}$ and its twisted counterparts. For the $\twpr$-$PRST$ and the $\twpr$-$TPRST$, the only metric formula is provided by the Clifford relation connecting $\clpr$ to $\gpr$.

In the case of connection 2.2, the $\tw$-morphism permits the transformation $\twphi\, :\, \twpr\text{-}PRST\, \to\, \twpr\text{-}TST$, but so far it is not clear which metric the $\twpr$-$TST$ can be associated with. The following important theorem answers this question.
\begin{theorem}
	\label{DistanceTW}
The $\twpr$-$TST$ is associated with the metric $\gr$.
\end{theorem}
\begin{proof}
Using the distance formula \eqref{MetForm}, the only difference in the $\twpr$-$TST$ appears at the level of the derivation, giving
\begin{align}
	d(x,y)=\sup_{a\in \calA^\prime}\{ |a(x)-a(y)|\, \mid\, \| [\prDir, a]_\rhpr  \|\leq 1 \}.
\end{align}
Computing $\| [\prDir, a]_\rhpr  \|^2$ yields
\begin{align*}
	\| [\prDir, a]_\rhpr  \|^2&=\|\twpr\clpr(da)\|^2=\frac{1}{2}\left(\|\clpr(da)\|^2+ \| \clpr(da)^\dagger \|^2 \right)\\	&=\frac{1}{2}\sup_{\substack{\psi\in\calH\\  \, \|\psi \| =1}}\{ \langle \left(\clpr(rd({\bar a}))\clpr(da)+\clpr(da)\clpr(rd({\bar a}))\right)\psi , \psi \rangle  \}\\
	&=\sup_{\substack{\psi\in\calH\\  \, \|\psi \| =1}}\{ \langle \left(\gpr(grad\, a, r grad\, {\bar a}) \right)\psi , \psi \rangle  \}\\
	&=\sup_{\substack{\psi\in\calH\\  \, \|\psi \| =1}}\{ \langle \left(\gr(grad\, a, grad\, {\bar a}) \right)\psi , \psi \rangle  \}= \|grad\, a\|_\infty^2.
\end{align*}
Using equation \eqref{EqRelclpradj} in the second line. This implies that $d(x,y)=d_\gr(x,y)$, so that the metric associated with the $\twpr$-$TST$ is $\gr$.
\end{proof}

For the connection 1.2, the computation (hence the result) for the distance formula of the $ST_{(2)}$ is the same for the $\twpr$-$TST$ defined on $\calA$, implying that the $ST_{(2)}$ can be associated with the metric $\gr$. However for connection 2.1, a difference in the distance formula appears for the $\twr$-$TST$ on $\calA$ as we obtain that 
\begin{align*}
	\| [\rDir, a]_\rhr  \|^2&=\sup_{\substack{\psi\in\calH\\  \, \|\psi \| =1}}\{ \langle \left(\gr(grad\, a, grad\, {\rho(\bar a)}) \right)\psi , \psi \rangle  \}.
\end{align*}
It is not clear in this case whether the distance formula can be associated with the metric $\gr$. We can nevertheless impose a Clifford relation to connect $\clr$ to $\gr$ to provide a metric formula for the $\twr$-$TST$ or to restrict to $\calA^\prime$.

From these arguments, the action of the $\tw$-morphism can be regarded as implementing a change of signature at the level of the metric data only for connections 1.2 and 2.2. The effect on the metric remains unclear for connections 1.1 and 2.1, since in our compact pseudo-Riemannian setting we do not dispose of a Connes-type distance formula for the $\twr$-$TPRST$ and the $\twr$-$PRST$. Let us stress, however, that spectral formulations of the Lorentzian distance do exist in the framework of Lorentzian spectral triples on non-compact, stably causal space-times; see for instance \cite{franco2014temporal,franco2018lorentzian}. These constructions rely on non-unital algebras and global hyperbolicity assumptions, and are therefore not directly applicable to the compact pseudo-Riemannian triples considered in the present work. Extending our twist construction to such globally hyperbolic Lorentzian spectral triples lies beyond the scope of this paper.

A general formula for the local extraction of the metrics is provided by the following
\begin{align}
	\label{MetFormR}
\gr(v, w)&=\frac{1}{2^m}\Tr(\clr(v)\clr(w)),\qquad\quad \,\,\,\text{for the $ST_{(1)}$ and the $\twr$-$TST$}.\\
\label{MetFormPR}
\gpr(v, w)&=\frac{1}{2^m}\Tr(\clpr(v)\clpr(w)),\qquad\quad \text{for the $\twpr$-$PRST$ and the $\twpr$-$TPRST$}.
\end{align}
where $v,w\in T_x\Man$ for a given point $x\in \Man$.

The $\tw$-morphism action on the Dirac operators induces an action on the Clifford representation given by
\begin{align*}
\cl(v)\xrightarrow{\twphi}\tw \cl(v).
\end{align*}
This action extends to an action on the metric formulas in \eqref{MetFormR} and \eqref{MetFormPR}
\begin{align}
	\label{EqAZE}
	&\gr(v, w)=\frac{1}{2^m}\Tr(\clr(v)\clr(w))\,\xrightarrow{\twphir}\, \frac{1}{2^m}\Tr(\twr\clr(v)\twr\clr(w)),\\
	\label{EqAZD}
&\gpr(v, w)=\frac{1}{2^m}\Tr(\clpr(v)\clpr(w))\,\xrightarrow{\twphipr}\, \frac{1}{2^m}\Tr(\twpr\clpr(v)\twpr\clpr(w)).
\end{align}
Using \eqref{EqRhPar1} and \eqref{EqRhPar2}, equations \eqref{EqAZE} and \eqref{EqAZD} became 
\begin{align}
 &\frac{1}{2^m}\Tr(\twr\clr(v)\twr\clr(w))=\frac{1}{2^m}\Tr(\clr(rv)\clr(w))=\gr(rv, w)=\gpr(v,w),\\
	&\frac{1}{2^m}\Tr(\twpr\clpr(v)\twpr\clpr(w))=\frac{1}{2^m}\Tr(\clpr(rv)\clpr(w))=\gpr(rv, w)=\gr(v,w).
\end{align}
The $\tw$-morphism action then extends to an action on the metric given by $\gr\,\xleftrightarrow{\twphi}\,\gpr$, then to a signature change.
 
It is important to emphasize that all the operations producing the signature change are defined locally on $\Man$. In particular, when $\gpr$ has Lorentzian signature, our compactness assumption prevents $(\Man,\gpr)$ from being a globally hyperbolic space-time, so that no global causal structure is reconstructed from our data. The signature change implemented by the $\tw$-morphism should therefore be understood as a local modification of the metric, in the same spirit as the Wick rotation in quantum field theory. Whether and how such a local construction can be extended to a non-compact globally hyperbolic Lorentzian manifold is a separate problem that we do not address in this work.

The following table summarizes the results obtained in this subsection.

\begin{table}[H]
	\centering
	\begin{tabular}{|P{1.9cm}|P{1.3cm}|P{4.7cm}|P{5.3cm}|}
		
		\hline
		connection & algebra & spectral triple and metric & dual spectral triple and metric \\ \hline
		\multirow{1}{*}{1.1}
		& $\calA^{\prime}$ & $ST_{(1)}\,\,$ and  $\,\,\gr(\, .\, ,\, .\,)$ & $\twr$-$TPRST\,\,$ and  $\,\,\gr(\, .\, , r\, .\,)$  \\
		\hline
		\multirow{1}{*}{2.1}
		& $\calA$ &$\twr$-$TST\,\,$ and  $\,\,\gr(\, .\, ,\, .\,)$ & $\twr$-$PRST\,\,$ and  $\,\,\gr(\, .\, , r\, .\,)$ \\
		\hline
		\multirow{1}{*}{2.2}
		& $\calA^{\prime}$ & $\twpr$-$TST\,\,$ and  $\,\,\gpr(\, .\, , r \, .\,)$ & $\twpr$-$PRST\,\,$ and  $\,\,\gpr(\, .\, ,\, .\,)$ \\
		\hline
		\multirow{1}{*}{1.2}
		& $\calA$ & $ST_{(2)}\,\,$ and  $\,\,\gpr(\, .\, , r \, .\,)$ & $\twpr$-$TPRST\,\,$ and  $\,\,\gpr(\, .\, ,\, .\,)$ \\
		\hline
	\end{tabular}
	\caption{\label{TableResMet} Relation between metrics for each connection.}
\end{table}
The signature change is given by the passage from one column of spectral triples to the one of the dual spectral triple, implemented by the corresponding $\tw$-morphisms. In context 1 (dualities 1.1 and 2.1), the transition between columns is governed by the action of $\phi^{\,\twr}$ or, equivalently, by the transformation $\twr\, : \mbb\,\to\, \fnr$. Similarly, in context 2 (dualities 2.2 and 1.2), the transition between columns is carried out by $\phi^{\,\twpr}$ or, equivalently, by the transformation $\twpr\, : \mbb\,\to\, \fnpr$.

All of these results motivate the following definitions, to construct an analogue of the Clifford algebraic structure for the Dirac operators $\HDir$.
 \begin{definition}[Twisted Clifford algebra]
 	\label{DefTwCliff}
Taking a finite dimensional vector space $V$ together with a nondegenerate symmetric bilinear form $g$, the $\rho$-twisted Clifford algebra $Cl_\rho(V, g)$ is the finite dimensional algebra generated by the elements $c(x)$ where $x\in V$ subject to the relation
 	\begin{equation}
 		\{c(x), \rho(c(y))\}=2g(x, y)\bbbone \qquad\qquad \forall x,y \in V
 	\end{equation}
 where $\rho$ is a regular automorphism of $Cl_\rho(V, g)$.
 \end{definition}
An example is given by the case where the vector space is the tangent space $T\Man$, starting from the previous Clifford algebra $Cl(T\Man, g)$ where we have $\{c(v),c(w)\}=2g(v,w)\bbbone_{2^{m}}$. Taking the fundamental twist $\rho(\,.\,)=\tw(\,.\,)\tw$ with $\tw=\calJ_r$ of the form of equation \eqref{EqGenFondSym}, we define the corresponding $\rho$-twisted Clifford algebra $Cl_\rho(T\Man, g^r)$ as the algebra generated by the elements $\cl(v)$ with relation
\begin{equation}
	\{c(v), \rho(c(w))\}=2g^r(v, w)\bbbone_{2^{m}}
\end{equation}
where $r$ is defined from the twist by the relation $\rho(c(v))=c(rv)$. The metrics $g^r$ and $g$ are related by $g(v, r w)=g^r(v, w)$.
\begin{definition}[Twisted Dirac operator]
Starting from a $\rho$-twisted Clifford algebra as defined previously with $\rho(\,.\,)=\tw(\,.\,)\tw$. A $\tw$-twisted Dirac operator is defined by
\begin{equation}
	\label{EqRhoDir}
\HDir=-i\tw c(\theta^a)\nabla_a^S
\end{equation}
where $\nabla_a^S$ is the spin connection associated with $\g$.
\end{definition}

As shown in Table \ref{TableResMet}, the $\tw$-twisted Dirac operator naturally arises from the connection and may be linked to the $\rho$-twisted Clifford algebra. Equation \eqref{EqRhoDir} expresses $\HDir$ as $\HDir = \tw \Dir$, where $\Dir = -ic(\theta^a)\nabla_a^S$ corresponds to the metric $\g$ and Clifford algebra $Cl(T\Man, g)$, while $\HDir$ is associated with $\g^r$ and the induced $\rho$-twisted Clifford algebra $Cl_\rho(T\Man, g^r)$. The transformation $\tw : \mbb \to \calJ_r$, implemented via the $\tw$-morphism, maps a Clifford algebra to its $\rho$-twisted counterpart, enabling signature changes when they occur at the Clifford structure level. These results apply to even-dimensional manifolds. Since the twist is fundamental, Proposition \ref{PropFundCompTw} ensures that the signature change affects an odd number of metric entries in an orthonormal basis, encompassing the crucial transition from Euclidean to Lorentzian manifolds, the core of the signature problem. It is noteworthy that Proposition~\ref{PropFundCompTw} connects the twist by the grading with an odd number of sign changes; whether this reflects a deeper underlying structure remains an open question. A more thorough characterization of the twisted Clifford algebra and the twisted Dirac operator will be undertaken in future work.

\subsection{The 4-dimensional compact Lorentzian model}
\label{SubSec4dlor}
 
An important example of physical interest is provided by a 4-dimensional compact pseudo-Riemannian manifold $(\Man,\gpr)$ with metric signature $(1,3)$. Throughout this subsection we still assume that $\Man$ is compact, so $(\Man,\gpr)$ should be regarded as a local Lorentzian model for the Dirac operator rather than as a realistic globally hyperbolic space-time. The metric $\gpr$ admits a local oriented orthonormal basis $\{E_1,\dots,E_4\}$ with a splitting $T\Man = T\Man^+\oplus T\Man^-$ induced by $\gpr$, such that $\dim(T\Man^+)=1$ and $\dim(T\Man^-)=3$, where $T\Man^+$ is timelike and $T\Man^-$ spacelike.

 Then $\gpr$ admits a local oriented orthonormal basis $\{E_1, \dots, E_4\}$ with splitting $T\Man=T\Man^+\oplus T\Man^-$ induced by $\gpr$ so that $\dim(T\Man^+)=1$ (then $\dim(T\Man^-)=3$) with $\{E_1\}$ the basis of $T\Man^+$ and $\{E_2, E_3, E_4\}$ the one of $T\Man^-$. As previously, the physical Dirac operator is given by $\TprDir=-i\gmpr^a\nabla_a^{{\scriptscriptstyle PR, S}}$. A spacelike reflection $r$ can then be defined so that $rv^\pm=\pm v^\pm$ for $v^\pm\in T_x\Man^\pm$. Following equation \eqref{EqFondSymCompC2} the associated fundamental symmetry is given by $\fnpr=\gmpr^1$. The corresponding twist is given by the action $\rhpr(\, .\,)=\twpr(\, .\,)\twpr$ with $\twpr=\fnpr$ and a selfadjoint Dirac operator $\prDir$ is deduced from $\TprDir$ by the relation $\prDir= \twpr \TprDir$.

We are then left with these two options:
\begin{itemize}
\item \textbf{connection 1.2} between the corresponding $ST_{(2)}$ and $\twpr$-$TPRST$ on $\calA$.
\item \textbf{connection 2.2} between the corresponding $\twpr$-$PRST$ and $\twpr$-$TST$ on $\calA^\prime$. 
\end{itemize}
We have from corollary \ref{PropDualFermAct} that the physical Dirac action of quantum field theory is obtained in each case by the evaluation of the corresponding fermionic actions:
\begin{equation}
\act_f^\twpr( \TprDir, \psi)=\langle \psi , \TprDir\psi \rangle_\twpr=  \langle \psi , \prDir\psi \rangle=  \act_f( \prDir, \psi)
\end{equation} 
where $\act_f^\twpr$ (resp $\act_f$) corresponds to the fermionic action used for the $\twpr$-$TPRST$ and the $\twpr$-$PRST$ (resp for the $ST_{(2)}$ and the $\twpr$-$TST$).
 
Moreover, particular examples of $\twpr$-unitary operators are given by Lorentz transformations, when taking the twist to be implemented by the first gamma matrix $\gmpr^1$ i.e. $\rho(\, .\,)=\gmpr^1(\, .\,)\gmpr^1$ (see \cite{martinetti2024torsion}). This fermionic action is naturally preserved by $\twpr$-unitaries:
\begin{equation}
	\act_f^\twpr( U_\twpr\TprDir U_\twpr^\Tadjpr, U_\twpr\psi)=\act_f^\twpr( \TprDir, \psi)\qquad\text{with}\qquad U_\twpr\in \calU_\twpr(\calB(\calH))
\end{equation} 
and then by Lorentz transformations. 
\begin{remark}
Such a Dirac operator $\prDir$ has also been used in the name of Krein-shifted Dirac operator (see \cite{bochniak2020spectral, bochniak2019pseudo}) as an attempt to solve the fermion doubling problem. 
\end{remark} 

Starting from context 2 is physically justified as it provides a Lorentzian Dirac operator, suggesting that $ST_{(2)}$ or $\twpr$-$TST$ may be suitable spectral triple candidates for the almost-commutative manifold in the noncommutative Standard Model. Conversely, starting from context 1 allows deducing the Lorentzian Dirac operator $\HTrDir$, dual to the usual Riemannian $\rDir$. Notably, in this setting, for the $\twr$-$TST$ on a 4D manifold, a geodesic-preserving torsion emerges (see Section \ref{SecTwsp} and \cite{martinetti2024torsion}), taking the form of axial terms proportional to the grading $\Gamma$, known in QFT for breaking parity symmetry.

\newpage

\section{Summary and outlook}

We established a precise correspondence between the twist-based and pseudo-Riemannian approaches to noncommutative geometry. The central result, Theorem~\ref{Thmconnection}, shows that the four generalized spectral triples of Definition~\ref{Def4kindsST} are related by two bijective $\tw$-morphisms preserving the algebraic axioms and the structure of gauge fluctuations. Corollary~\ref{PropDualFermAct} further demonstrates that both the bosonic and fermionic actions are stable under this correspondence. On the geometric side, Theorem~\ref{DistanceTW} proves that the $\tw$-morphism acts as a local signature change transformation in the case of connection 2, providing an alternative to Wick rotation. This approach to implement signature change appears more natural than usual Wick rotation, as it applies uniformly to the algebraic structure of each geometric dimension via a single operator $\tw$, which, in quantum field theory, is explicitly given by the temporal gamma matrix. This provide a unified algebraic mechanism for relating Euclidean and Lorentzian data locally, for compact manifolds.

It may retrospectively appear more natural to require $\rho$ to be $\calB(\calH)$-regular (instead of being regular only on the algebra) in the axiom of the twisted spectral triple since the twist $\rho$ induces a $\tw$-inner product defined on all of $\calH$. Moreover, the specificity of their corresponding fundamental symmetries plays a key role in implementing the signature change. In this case, taking $\rho\in\Inn(\calB(\calH))$ with $\rho(\, .\,)=\tw(\, .\,) \tw^\dagger$, lemma \ref{PropRegImpFund} ensures that $\tw=\exp(i\theta)\tw^\dagger$, so that the $\tw$-inner product is a Krein product only if $\theta=2n\pi$ for some integer $n$. The condition $\tw=\tw^\dagger$ emerges as a fundamental property of the connection. It ensures the self-adjointness of the Dirac operators with respect to their inner products, establishes the relation between definite and indefinite inner products (see Remark \ref{RqTwRelatDef}), and ultimately to set up the connection. 

The special case of even-dimensional manifolds is particularly interesting. Observing that fundamental symmetries are linked to reflections (see Proposition \ref{PropParOp}), we demonstrate that in the case of space-like reflections of an odd number of dimensions, the corresponding twist is the parity operator, which enables the definition of signature change. The notion of twisted Clifford algebra (see Definition \ref{DefTwCliff}) provides a framework in which the Dirac operators denoted by $\HDir$ (which are not associated with a Clifford structure) acquire meaning, along with their signature-transformed metrics.

 An interesting feature of $\tw$-morphisms is that they are fully parametrized by the operators $\tw$, presenting the corresponding signature changes as "deformations" of spectral triples that preserve the fermionic and spectral actions.   

This raises the question of why physics appears to favour one of the two dual spectral triples. The fermionic actions $\act_f^\twpr(\TprDir,\psi)$ and $\act_f(\prDir,\psi)$ differ in the symmetry groups they admit, since they arise from distinct inner products. Given the Lorentzian nature of physical space-time, only $\act_f^\twpr(\TprDir,\psi)$, and thus the $\twpr$-$PRST$, is compatible with a Lorentz-invariant scalar mass term. Combined with the fact that, in the noncommutative Standard Model, masses originate from the finite part of the almost commutative geometry, this suggests that Lorentz symmetry constrains the choice of spectral triple entering the connection and points to a relation between the finite noncommutative sector and the emergence of the time direction. We hope that this connection will provide a concrete path towards a Lorentzian formulation of almost commutative geometries, for instance by incorporating one of the dual spectral triples of Subsection~\ref{SubSec4dlor} into the noncommutative Standard Model. The extension of this programme to non-compact globally hyperbolic space-times is left for future work.

\newpage
\section*{Acknowledgements}
This work is part of the project “Geometry of fundamental interactions: from quantum space to quantum spacetime” funded by the CARIGE foundation. 

%
%

\bibliography{bibliography}

\end{document}